\documentclass[11pt]{article}
\usepackage[margin=1in]{geometry}
\usepackage[T1]{fontenc}
\usepackage{lmodern}
\usepackage{amsmath,amssymb,amsthm,mathtools}
\usepackage{microtype}
\usepackage{enumitem}
\usepackage{needspace,array}
\usepackage{xcolor}
\usepackage[colorlinks=true,linkcolor=blue!45!black,citecolor=blue!45!black,urlcolor=blue!45!black]{hyperref}
\usepackage[nameinlink,capitalise,noabbrev]{cleveref}
\setlist{itemsep=3pt,topsep=5pt,leftmargin=*}
\newtheorem{theorem}{Theorem}[section]
\newtheorem{lemma}[theorem]{Lemma}
\newtheorem{proposition}[theorem]{Proposition}
\newtheorem{corollary}[theorem]{Corollary}
\theoremstyle{definition}
\newtheorem{definition}[theorem]{Definition}
\newtheorem{remark}[theorem]{Remark}
\newcommand{\Fp}{\mathbb F_p}
\newcommand{\bits}{\{0,1\}}
\newcommand{\Span}{\operatorname{span}}
\newcommand{\rank}{\operatorname{rank}}
\newcommand{\wt}{\operatorname{wt}}
\newcommand{\poly}{\operatorname{poly}}
\newcommand{\Adv}{\operatorname{Adv}}
\newcommand{\Test}{\mathcal T}
\newcommand{\Smat}{\mathcal S}
\newcommand{\Vmat}{\mathcal V}
\DeclareMathOperator{\Enc}{Enc}
\DeclareMathOperator{\Dec}{Dec}
\title{Witness Encryption via Prime-Order Generic Groups}
\author{Isaac M Hair \thanks{\texttt{isaacmhair@gmail.com}} \\ UCSB, UCLA \and Amit Sahai \thanks{\texttt{sahai@cs.ucla.edu}} \\ UCLA}
\date{}

\begin{document}
\maketitle
\vspace{-2em}
\begin{abstract}
We unconditionally construct witness encryption for NP in the classical generic-group model, using an ordinary cyclic group of prime order. For SAT instances of size $n$, the encryption algorithm runs in time poly$(n)$, and any satisfying assignment can be used to decrypt in poly$(n)$ time with correctness error \(2^{-n^{\Omega(1)}}\). If no satisfying assignment exists, then every generic adversary making at most \(n^{\Theta(\log n)}\) group queries has distinguishing advantage at most
\(n^{-\Theta(\log n)}\).

Along the way, we prove the first superconstant-factor NP-hardness of
approximation result for homogeneous MinRank under randomized
polynomial-time reductions, achieving a logarithmic gap even when the
rank-one witness has a Boolean right factor.
\end{abstract}

\section{Introduction}
Witness encryption allows a message to be encrypted to a mathematical
statement. Anyone who knows a witness that the statement is true can
decrypt. If the statement is false, the ciphertext must hide the message.
Since its introduction by Garg, Gentry, Sahai, and
Waters~\cite{GargGentrySahaiWaters2013}, witness encryption has provided a
way to turn the existence of witnesses into cryptographic access control.

We study this problem in the ordinary prime-order generic-group model.
In this model, group elements have random encodings, and algorithms access
the group law through an oracle. There is no pairing or multilinear
operation. Barta, Ishai, Ostrovsky, and Wu~\cite{BartaIshaiOstrovskyWu2020}
identified a route to witness encryption in this model based on an (as-of-yet unproven) conjecture on NP hardness of approximation for the minimum distance of code problem. Their approach raises a
useful broader question: which NP hardness of approximation gaps lead to a witness encryption scheme that can simultaneously support
efficient decryption and a generic security proof?

Our answer uses matrix rank. The reduction produces a linear space of
matrices with two properties. On a true statement, a witness identifies
a rank-one matrix \(uv^T\), where \(v\) has small integer coordinates.
On a false statement, every nonzero matrix in the space has large rank.
The first property allows us to perform decryption in polynomial time, and the second makes the algebraic equalities that a generic adversary can test unlikely.

\subsection{Our results}
We use the standard witness-encryption requirement: correctness on true
statements and indistinguishability on false statements. Security does
not require hiding the message on a true statement from a party that
lacks a witness. See \cref{def:we} for a formal definition. Throughout, logarithms have base two.

\begin{theorem}[Witness encryption]\label{thm:main}
There are polynomial-time witness-encryption algorithms for every NP
relation in the classical prime-order generic-group model with the
following guarantees. Let \(n=\max\{2,\lambda,|\varphi|\}\), where
\(\lambda\) is the security parameter and \(\varphi\) is the encoded
NP statement. Then:
\begin{enumerate}
\item Every valid witness decrypts correctly with probability at least
\(1-2^{-n^{\Omega(1)}}\).
\item On every false statement, every classical adversary making at most
\(n^{\Theta(\log n)}\) group-oracle queries has distinguishing advantage at most
\(n^{-\Theta(\log n)}\).
\end{enumerate}
Each encryption uses a fresh random group encoding. Adversaries may
perform arbitrary local computation, inspect encoding strings, and submit
arbitrary strings to the oracle. Their advice is independent of the
fresh encoding and the encryption randomness.
\end{theorem}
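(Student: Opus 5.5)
The proof combines the MinRank hardness-of-approximation reduction from the abstract with a generic-group encoding of the resulting matrix space, and then splits into a correctness argument and a security argument.

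\textbf{Step 1: the reduction.} Given \(\varphi\), we run the randomized reduction to obtain a linear space \(\mathcal L=\Span\{M_1,\dots,M_k\}\subseteq \Fp^{a\times b}\) of matrices. When \(\varphi\) is satisfiable, a witness yields efficiently computable coefficients \(c\) with \(\sum_i c_iM_i=uv^T\), where \(v\) has Boolean (or small integer) coordinates. When \(\varphi\) is unsatisfiable, every nonzero element of \(\mathcal L\) has rank at least \(r=\Theta(\log n)\cdot(\text{completeness rank})\). The reduction's own failure probability is absorbed into the correctness and security errors.

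\textbf{Step 2: encryption.} We sample random invertible \(L\in\Fp^{a\times a}\) and \(R\in\Fp^{b\times b}\), or more simply a random matrix \(S\). We then publish group encodings \([L M_i R]\) or an analogous randomized masking, together with a key element whose discrete log is a bilinear form that the rank-one structure lets us compute. Concretely, I would publish \([s^T M_i]\) for a random vector \(s\). A decryptor holding the witness knows \(c\) and \(v\), so it can form \([s^T(\sum c_iM_i)]=[(s^Tu)v^T]\) using only group operations. Because \(v\) has small coordinates, it can recover the scalar multiple \(s^Tu\) in encoded form. The message bit is hidden as the choice between \([\alpha\cdot s^Tu]\)-type targets and random elements. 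Decryption then reduces to checking a group equality, or, if the structure requires it, a short discrete-log search over a polynomially bounded range; the boundedness of \(v\) is what makes this efficient. Amplifying over \(n^{\Omega(1)}\) independent repetitions brings the correctness error down to \(2^{-n^{\Omega(1)}}\).

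\textbf{Step 3: security in the generic-group model.} We use the standard simulation argument in the style of Shoup and Maurer. The simulator answers every query with fresh handles. It fails only when some query forms a nontrivial linear combination of the published formal polynomials (in the secret variables \(s\), \(L\), \(R\), and so on) that vanishes at the actual random point but is not identically zero. A linear combination of the published elements corresponds to a matrix \(X\in\mathcal L\) plus target terms. Each such equality test therefore becomes a question of whether a random sketch of \(X\) (for instance \(s^TX\) restricted to the published coordinates, or \(LXR\) sampled entries) vanishes. If \(\rank X\ge r\), this probability is about \(p^{-\Omega(r)}\), or \(2^{-\Omega(r)}\) once the repetitions use Boolean-style randomness. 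We must also rule out distinguishing the message, which means showing that the target element is not in the span. That part is exactly where soundness of the reduction is used: any combination hitting the target forces a rank-one element of \(\mathcal L\). With \(Q=n^{\Theta(\log n)}\) queries, a union bound over the \(Q^2\) pairs gives advantage \(Q^2\cdot n^{-\Theta(\log n)}\), and the constants are tuned so that this stays \(n^{-\Theta(\log n)}\). Adversarial local computation, inspection of encodings, and advice that is independent of the fresh encoding are handled by the usual lazy-sampling argument.

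\textbf{Main obstacle.} The step I expect to be hardest is the core of Step 3: converting "every nonzero matrix in \(\mathcal L\) has rank at least \(r\)" into a quantitative bound showing that an adaptive generic adversary's zero tests succeed with probability only \(2^{-\Omega(r)}\). The encodings are necessarily lossy or low-dimensional, since decryption must remain efficient, so the plain Schwartz--Zippel bound \(\deg/p\) does not suffice. The randomness has to be structured, for example through small-range masks, so that a rank-\(r\) matrix is annihilated with probability only \(2^{-\Omega(r)}\), while rank-one matrices with Boolean \(v\) remain decodable. I would first try a random low-dimensional projection argument, bounding \(\Pr[\,s^TX\in\text{(small set)}\,]\) by a rank-based anti-concentration (Littlewood--Offord-type) lemma.
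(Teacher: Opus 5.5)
Your outline has the right overall shape: a rank gap, a generic-group simulation, and a union bound over pairs of forms. However, the encryption scheme is never pinned down, and the one concrete instantiation you give does not work.

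Suppose the bit is carried by publishing either \([s^TM_i]\) or uniform elements. Then the scheme is broken regardless of any rank property. The \(km\) published exponents \(s^TM_ie_j\) are linear forms in the \(m\)-dimensional secret \(s\). The product with exponents \((w_i)_j\) encodes \(s^T\sum_iM_iw_i\), which is identically zero whenever \(\sum_iM_iw_i=0\). That is a linear system in \(km\) unknowns and \(m\) equations, so it has nonzero solutions once \(k\ge2\). For uniform elements the same product is nontrivial with probability \(1-1/p\).

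This also shows why your claim that a combination of published elements ``corresponds to a matrix \(X\in\mathcal L\)'' fails. It corresponds to \(\sum_iM_iw_i\), with a separate vector for each basis matrix, not to \(Xw\) for a single \(X\). If instead the bit is carried by a separate target \([\alpha\,s^Tu]\), the encryptor cannot compute it, because \(u\) is determined by the witness.

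The missing idea is a second hidden vector \(r\), applied on the right and shared by all basis matrices. The paper publishes \(X_i=g^{s_i}\) and \(Y_j=g^{s^TM_jr}\), or uniform \(g^{\eta_j}\) for the other bit, with \(r\) uniform on \(\{0,\ldots,m-1\}^m\). Then every adversarial form is \(c+s^T(\alpha+M_\gamma r)\) for a single matrix \(M_\gamma=\sum_j\gamma_jM_j\in\Smat\). Meanwhile the witness combination gives \(Z=g^{(s^Tu)(v^Tr)}=H^{v^Tr}\), with \(H=\prod_iX_i^{u_i}\) and \(v^Tr\in[0,(m-1)\wt(v)]\). So decryption is a polynomial-length enumeration of powers of \(H\). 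No repetition is needed, since the false-match probability is already at most \((B+1)/p\le2^{-\Omega(m)}\).

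Your quantitative bookkeeping also does not reach the stated bound. With Boolean-style masks, rank alone only guarantees a per-pair collision bound of \(2^{-\rho}\); this is attained when the nonzero columns of the matrix are \(\rho\) independent ones. For \(\rho=\Theta(\log n)\) that is only \(n^{-\Theta(1)}\), which cannot survive a union bound over \(n^{\Theta(\log n)}\) queries. So your final estimate \(Q^2n^{-\Theta(\log n)}\) does not follow from the \(2^{-\Omega(r)}\) you propose.

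The mask must range over a grid of polynomial size per coordinate. Lemma~\ref{lem:collision} gives \(\Pr[c+s^T(\alpha+Mr)=0]\le m^{-\rho}+p^{-1}\) by an elementary argument, with no Littlewood--Offord input. Fix the coordinates of \(r\) outside \(\rho\) independent columns of \(M\); at most one choice of the rest solves \(Mr=-\alpha\), and otherwise the form is a nonconstant affine function of \(s\). Combined with \(\rank M_\gamma\ge d=\Theta(\log n)\) for every \(\gamma\ne0\) on false statements, this yields \(m^{-d}=n^{-\Theta(\log n)}\) per pair. The grid size is exactly the parameter trading the decryption search range against the per-collision bound.

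Finally, security is not a matter of ``any combination hitting the target forces a rank-one element.'' An adversary that guesses \(r_0\) and sets \(\alpha=-M_\gamma r_0\) does produce a collision with positive probability, so no formal-span argument suffices. What is needed is a collision probability that decays with \(\rank M_\gamma\), which is what the lemma provides.
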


The quantitative bound in \cref{thm:security} gives the tradeoff between
query budget and distinguishing advantage.

We also obtain a logarithmic hardness gap for a bounded-factor version
of MinRank (which automatically implies the same hardness gap for standard MinRank). Given a basis of a linear space
$\Smat\subseteq\Fp^{m\times m}$, homogeneous MinRank asks for the
minimum rank of a nonzero matrix in $\Smat$. In the bounded-factor
promise problem, the YES case contains a nonzero rank-one matrix
$uv^T$, where $u\in\Fp^m$ and the coordinates of $v$ are integers
in $\{0,\ldots,B\}$, viewed in $\Fp$, for a specified bound
$B<p$. The NO case requires every nonzero matrix in $\Smat$ to have
rank at least a specified threshold. We obtain a logarithmic gap even
when $B=1$, so the right factor is Boolean.

\begin{theorem}[Bounded-factor MinRank hardness, informal]
\label{thm:minrank-informal}
For some $f(m)=\Theta(\log m)$, it is NP-hard under randomized
polynomial-time reductions to distinguish the following cases for a
linear space $\Smat\subseteq\Fp^{m\times m}$, given by a basis,
over primes $p=\Theta(2^m)$:
\begin{enumerate}
\item $\Smat$ contains a nonzero matrix $uv^T$ with $v\in\bits^m$.
\item Every nonzero matrix in $\Smat$ has rank at least $f(m)$.
\end{enumerate}
\end{theorem}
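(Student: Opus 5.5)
The plan is a randomized polynomial-time reduction from 3SAT. It has two stages: first a constant (rank $\ge 2$) gap, then amplification to rank $\Theta(\log m)$.

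\textbf{Stage 1: the base gap.} Given a formula $\varphi$ on $N$ variables, set $v=(1,x_1,\ldots,x_N)$ and consider the rank-one matrix $uv^T$. The space $\Smat$ is cut out by linear constraints on a matrix $M$ chosen so that whenever $M=uv^T$ has rank one:
\begin{itemize}
\item the constraints force $v_0\ne 0$, so $v$ can be rescaled to have first coordinate $1$;
\item linear relations among the entries $M_{ij}=u_iv_j$ encode the Booleanity conditions $x_i^2=x_i$ and, for each clause, a polynomial equation that holds exactly when the clause is satisfied.
\end{itemize}
This is the Kipnis--Shamir linearization run backwards: rank-one points of a linear matrix space correspond to solutions of a quadratic system. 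In the YES case, a satisfying assignment gives $v\in\bits^{m}$ and a nonzero $u$ meeting all constraints. In the NO case there is no nonzero rank-one matrix, so every nonzero matrix has rank at least $2$. The step to check is that nothing degenerate slips through, for example a rank-one matrix with $v_0=0$. I would rule this out by adding constraints that pin $M$'s first column to determine all of $u$.

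\textbf{Stage 2: amplification.} Tensoring $\Smat$ with itself keeps the YES witness, since $(u\otimes u')(v\otimes v')^T$ is rank one and $v\otimes v'$ is still Boolean. However, the minimum rank in a tensor product of matrix spaces need not be multiplicative, for the same reason that tensor codes can lose distance. So instead I would imitate the Dumer--Micciancio--Sudan route for minimum distance, transplanted to the rank metric.
\begin{itemize}
\item Replace each ``coordinate'' of the Stage 1 instance by a block.
\item Glue the blocks with a random gadget $\Vmat$: a rank-metric code with a dense cluster of low-rank points around a planted rank-one matrix (Gabidulin-type, or simply random over $\Fp$).
\item In the YES case, the Boolean rank-one witness combines with a planted low-rank point to stay rank one.
\item In the NO case, any nonzero matrix must either have large rank inside the gadget or violate the base instance in many blocks simultaneously, giving rank $\gtrsim$ (number of blocks)${}\times{}$(base gap).
\end{itemize}
Taking $\Theta(\log m)$ blocks gives $f(m)=\Theta(\log m)$. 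The large prime $p=\Theta(2^m)$ is used in two places. It makes union bounds over the randomness work via Schwartz--Zippel, since the probability that a random combination has unexpectedly low rank is at most $\deg/p$. It also ensures that integers in $\{0,1\}$ do not wrap around.

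\textbf{Main obstacle.} The hard part is the NO-case soundness of the amplification: ruling out a cleverly chosen nonzero combination whose rank is much smaller than the sum of its block contributions. My first attempt would bound rank from below by a minor argument. For each block violated by $M$, exhibit a $2\times 2$ minor supported on that block's rows and columns, and arrange the block supports so these minors combine into a $2k\times 2k$ nonsingular minor whose determinant is a nonzero polynomial in the gadget randomness. Schwartz--Zippel over $\Fp$ would then make it nonzero with high probability, and a union bound over a net of low-rank candidates (rank below $f(m)$, so at most $p^{O(mf(m))}$ of them) would finish the argument. Whether that union bound closes at $f(m)=\Theta(\log m)$, given $p\approx 2^m$, is the place I expect the parameters to be tight.
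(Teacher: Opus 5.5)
Your Stage 1 matches the paper's rank-one starting point (Lemma~\ref{lem:rank-one}) up to fixable details. As written, 3-clauses give cubic constraints, which are not linear functions of the quadratic monomials $x_ix_j$, so you need a gate-by-gate quadratic encoding with auxiliary bits. Degenerate rank-one matrices are excluded by imposing symmetry together with $M_{ii}=M_{0i}$, whose $2\times2$ minors force $M_{00}\ne0$.

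The genuine gap is Stage 2. It carries the whole content of the theorem, and you leave it as a heuristic. The claimed bound ``rank $\gtrsim$ (number of blocks)$\times$(base gap)'' has no mechanism in the rank metric. A layout guarantees that rank adds over blocks only when the blocks occupy disjoint row and column ranges. But then a rank-one $uv^T$ can be nonzero in at most one block, since two nonzero diagonal blocks would force the off-diagonal block $u_{I_1}v_{J_2}^T\ne0$. So the NO-case bound would already have to hold inside a single block, and the amplification is circular. If instead blocks share columns, so that one rank-one witness can touch many of them, a stack of blocks is only guaranteed rank at least the \emph{maximum} block rank, i.e.\ the base gap $2$. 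Your fallback certificate also fails by a wide margin rather than being tight. Schwartz--Zippel bounds the failure probability for one fixed candidate by $\poly(m)/p\approx\poly(m)2^{-m}$, while your net has $p^{\Omega(mf(m))}$ members. The union bound is therefore off by a factor $p^{\Omega(mf(m))}$.

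The paper does not amplify a base gap; the rank-$R$ soundness is built into one deterministic construction. It stacks the tables $h(b)v(b)v(b)^T$ for all weights $h$ that are monomials of degree at most $R$ in the forms $\ell_{j,t}(b)=\sum_i(2^{j-1}t)^ib_i$, $1\le j\le R$, over samples $t\in\{0,\ldots,2NR\}$. It takes the span over $b\in\bits^N$ with the \emph{same} coefficients $\lambda_b$ in every block, and imposes the equation tests blockwise. Sharing the right factor $v(b)$ keeps the YES witness rank one and Boolean, and soundness never needs rank to add over blocks.

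Given a nonzero $A$ of rank $\rho\le R$, a Vandermonde root-counting argument (Lemma~\ref{lem:spanning}) yields a sample $t$ at which $\ell_{1,t},\ldots,\ell_{\rho,t}$ span the column space and some block is nonzero. Column relations then justify replacing each $b_i$ by a linear form $L_i(w(b))$ in $\rho$ coordinates inside every weighted sum. The degree-$R$ weights supply Boolean selectors $E_e$ on $\bits^\rho$, and unsatisfiability forces every selector mass $\mu_e$ to vanish. Hence every block at $t$ is zero, a contradiction. The threshold $R=\lfloor\log N\rfloor$ is what keeps the $\binom{2R}{R}\le4^R$ weights polynomial, and randomness is used only to pick the prime $p\in[2^m,2^{m+1})$.
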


See \cref{thm:minrank} for the formal reduction, which is deterministic
for every sufficiently large supplied prime, and \cref{cor:hardness}
for the randomized choice of prime. This also gives a logarithmic
hardness gap for homogeneous MinRank without the bounded-factor
requirement. Previous polynomial-time reductions gave only
constant-factor NP-hardness~\cite{GuruswamiRenTang2026}; their larger
gaps require superpolynomial reductions and stronger complexity
assumptions.

\subsection{Related work and mathematical ingredients}\label{sec:related}
\paragraph{Witness encryption and ideal groups.}
The first construction of witness encryption used multilinear
maps~\cite{GargGentrySahaiWaters2013}; Gentry, Lewko, and
Waters~\cite{GentryLewkoWaters2014} subsequently developed constructions
from instance-independent assumptions. The closest generic-group
precursor is Barta et al.~\cite{BartaIshaiOstrovskyWu2020}, who connect
witness-preserving approximation hardness over large fields to witness
encryption. Their proposed NP-hardness condition concerns Hamming minimum
distance with a gap larger than logarithmic, which is not known. We instead use a matrix
rank gap and prove the encryption and collision arguments directly
from that promise. 

Indistinguishability obfuscation also implies witness encryption: one
obfuscates a circuit that outputs the message on valid witnesses and
$\bot$ otherwise~\cite{GargEtAl2013Obfuscation}. On a false statement,
the circuits for different messages compute the same constant function.
Combining this implication with the constructions of Jain, Lin, and
Sahai~\cite{JainLinSahai2021,JainLinSahai2022} gives witness encryption
in the standard model from subexponential versions of well-studied
assumptions. Ragavan, Vafa, and Vaikuntanathan~\cite{RagavanVafaVaikuntanathan2024}
subsequently construct indistinguishability obfuscation from
subexponential versions of the decisional linear assumption on bilinear
groups, large-field LPN, and binary sparse LPN.

Other work relates witness encryption to proof systems.
Faonio, Nielsen, and Venturi study predictable arguments of
knowledge~\cite{FaonioNielsenVenturi2017}. Liu, Mazor, and Pass
characterize witness encryption through special-honest-verifier
zero-knowledge arguments with logarithmic prover
communication~\cite{LiuMazorPass2025}. Garg, Hajiabadi, Kolonelos,
Kothapalli, and Policharla give a framework based on linearly verifiable
SNARKs, with special-purpose applications~\cite{GargEtAl2025Framework}.
Bartusek, Ishai, Jain, Ma, Sahai, and Zhandry use determinant and
rank-deficiency structure in their affine determinant program
framework for obfuscation and witness encryption~\cite{BartusekEtAl2020}.
Choi and Vaudenay study extractable witness encryption for a promise
version of multi-subset sum in a hidden-group-with-hashing
abstraction~\cite{ChoiVaudenay2022}; both the promise and the model
differ from ours.

Our security proof follows the random-encoding approach to generic-group
lower bounds exemplified by Shoup~\cite{Shoup1997}. The model and the
primitive matter when comparing such results. In particular, the
generic-group identity-based encryption lower bound of Schul-Ganz and
Segev~\cite{SchulGanzSegev2021} does not by itself exclude witness
encryption in the model considered here. As discussed by Barta
et al.~\cite{BartaIshaiOstrovskyWu2020}, the usual transformation from
witness encryption to identity-based encryption uses the group in a
non-black-box way.

\paragraph{MinRank and coding-theoretic hardness.}
Bl\"aser, Ikenmeyer, Lysikov, Pandey, and Schreyer establish NP-hardness
of homogeneous rank-one MinRank through homogeneous quadratic
feasibility~\cite{BlaeserEtAl2019}. Guruswami, Ren, and
Tang~\cite{GuruswamiRenTang2026} prove stronger promise-rank hardness.
For a source size \(M\) and rank threshold \(k\), their constructions have
size \(M^{O(\log k)}\) over fixed characteristic-two fields and
\(M^{O(k)}\) over arbitrary fixed finite fields. Their arbitrary-finite-field
moment construction already has Boolean rank-one factors. The issue
addressed here is the simultaneous attainment of polynomial output size,
a logarithmic rank threshold, and a growing prime characteristic.
Boolean completeness itself is an established feature of the
moment-matrix approach.

MinRank has several cryptographic formulations. Courtois uses an affine
MinRank problem for authentication~\cite{Courtois2001}; Gaborit and
Z\'emor study decoding and minimum distance for extension-field-linear
rank codes~\cite{GaboritZemor2016}. Chatterjee, Mu, and Vasudevan
construct public-key encryption from planted average-case binary
MinRank, using matrix-valued inner products and rank-metric
duality~\cite{ChatterjeeMuVasudevan2026}. Our construction uses a
worst-case reduction to a homogeneous prime-field matrix space.
Decryption tests a short interval of scalar exponents, and security
uses the rank of every nonzero matrix in the space.

As mentioned previously, Barta et al.~\cite{BartaIshaiOstrovskyWu2020} showed how to leverage sufficiently strong NP hardness of approximation for the minimum distance of code problem to build witness encryption. Austrin and Khot give a deterministic reduction for gap minimum
distance~\cite{AustrinKhot2014}. Bhattiprolu, Guruswami, Lee, and
Ren use tensor-code structure and rank-versus-weight estimates in their
hardness results for sparse vectors~\cite{BhattiproluGuruswamiLeeRen2025}. The characteristic-two approach of Guruswami
et al.\ also connects to the superposition-soundness methods of Khot and
Saket~\cite{KhotSaket2017}. All of these reductions only achieve a constant factor gap under polynomial time reductions.

\paragraph{Algebraic sources of the reduction.}
We will use a geometric-evaluation argument that belongs to the same family of
rank-preservation ideas used in explicit subspace designs and symbolic
rank condensers~\cite{GuruswamiKopparty2016,ForbesSaptharishiShpilka2014}.
Here we will prove the fact we need directly: after eliminating leading terms,
a determinant of geometrically sampled polynomials has a nonzero
Vandermonde leading coefficient. A large supplied prime keeps the
evaluation points distinct. We will combine this elementary fact with
weighted bit-pair tables whose right factor remains Boolean.

To efficiently construct the matrix space in our encryption algorithm, we will use tools related to the propagation of coefficient
spaces in algebraic branching programs. Raz and Shpilka retain bases of
such spaces in deterministic polynomial identity
testing~\cite{RazShpilka2005}. In our setting, the two transitions will be
Boolean choices, and we will prove the needed closure properties directly.

For soundness, we will use Boolean interpolation to get selectors on the small
coordinate space supplied by low rank. Our proof technique is
related to the use of polynomial equation consequences in Nullstellensatz
proof systems~\cite{BeameImpagliazzoKrajicekPitassiPudlak1996} and their
dual description by designs~\cite{Buss1998Designs}; polynomial calculus
provides a related dynamic proof system~\cite{CleggEdmondsImpagliazzo1996}.

Moment reconstruction provides additional context. Laurent and
Mourrain relate flat extensions of moment matrices over arbitrary
fields to quotient algebras and commuting multiplication
operators~\cite{LaurentMourrain2009}; Mourrain studies finite-rank
Hankel operators and Prony-type reconstruction~\cite{Mourrain2018}.
In characteristic two, power sums appear as BCH syndromes, with
shift-register reconstruction as developed by
Massey~\cite{Massey1969}. Moore determinants and linearized polynomials
describe the corresponding Frobenius linear
algebra~\cite{Moore1896,WuLiu2013}. Feng--Rao-type arguments obtain
rank lower bounds from a first nonzero syndrome and ordered products;
see Matsumoto and Miura~\cite{MatsumotoMiura2000}. These uses of moments and rank are related to our proof, although we will not use any of the theorems as black boxes, and we do not need
to explicitly reconstruct a low-rank matrix as a sum of points.

\section*{Acknowledgements and AI use methodology}
The human authors spent several months thinking about how to construct witness encryption in the generic group model making use of various NP reductions. We then began a search via Codex / ChatGPT 6 Astra Ultra using harness components from the UCLA Moonshot Harness Project~\cite{MoonshotHarness}. The human authors thought that NP hardness of MinRank could be helpful, and we suggested this approach to Codex. Codex responded that it did not see how to use MinRank directly, but that the variant of MinRank presented here (bounded-factor MinRank) could be proven NP-hard and used to build witness encryption. We then used Codex to significantly simplify the proof ideas it had suggested, and rewrote much of what it suggested.
Of course, the human authors take full responsibility for this paper and all its contents. If any reader is aware of any references that should be cited, please contact us and we will update the manuscript.

\begin{remark}
    Using Codex, we were also able to show that bounded-factor MinRank is NP hard to approximate within a factor of $(\log n)^c$ for any constant $c > 1$, which gives a witness encryption scheme with any desired quasipolynomial security, but we have not refined the proof sufficiently to incorporate this improvement into the present writeup.
\end{remark}

This research was supported in part by a Laude Moonshot seed award, a Simons Investigator Award, a DARPA expMath award, NSF grant 2333935, BSF grant 2022370, a Xerox Faculty Research Award, a Google Faculty Research Award, an Okawa Foundation Research Grant, and the Symantec Chair of Computer Science.

\section{Preliminaries}\label{sec:prelim}
Vectors are columns. For a prime \(p\), we write \(\Fp=\mathbb Z/p\mathbb Z\);
all matrix ranks and spans are over \(\Fp\), unless stated otherwise.
A Boolean vector in \(\Fp^m\) is the image of a vector in \(\bits^m\);
its Hamming weight \(\wt(v)\) counts its nonzero coordinates. We regard
these coordinates as the integers zero and one when bounding a
decryption search. A basis of a matrix space always means a linearly
independent basis.

\Needspace{25\baselineskip}
\begin{definition}[Witness encryption]\label{def:we}
Let $\mathcal R$ be a polynomial-time decidable, polynomially balanced
relation, and let $L=\{\varphi:\exists w,\;(\varphi,w)\in\mathcal R\}$.
A witness encryption scheme for $\mathcal R$ consists of polynomial-time
algorithms $\Enc$ and $\Dec$~\cite{GargGentrySahaiWaters2013,GentryLewkoWaters2014}.
Encryption takes $(1^\lambda,\varphi,\beta)$, where $\beta\in\bits$,
and outputs a ciphertext $C$. Decryption takes
$(1^\lambda,\varphi,w,C)$ and outputs a bit or $\bot$.
Writing $n=\max\{2,\lambda,|\varphi|\}$, we require:
\begin{enumerate}
\item \emph{Correctness.} There is a negligible function $\kappa$
such that, for every $\lambda$, every
$(\varphi,w)\in\mathcal R$, and every $\beta\in\bits$,
\[
 \Pr\!\left[
 \Dec(1^\lambda,\varphi,w,\Enc(1^\lambda,\varphi,\beta))=\beta
 \right]\ge 1-\kappa(n).
\]
\item \emph{False-statement security.} For every probabilistic
polynomial-time adversary $A$, there is a negligible function
$\varepsilon_A$ such that, for every $\lambda$ and every
$\varphi\notin L$,
\[
 \begin{split}
 \Adv_A(\lambda,\varphi)
 =\bigl|&
 \Pr[A(1^\lambda,\varphi,\Enc(1^\lambda,\varphi,0))=1]\\
 &-\Pr[A(1^\lambda,\varphi,\Enc(1^\lambda,\varphi,1))=1]
 \bigr|
 \le\varepsilon_A(n).
 \end{split}
\]
\end{enumerate}
Probabilities are over the algorithms' randomness, and the bounds are
uniform over the statements and witnesses. No security requirement is
imposed when $\varphi\in L$.
\end{definition}

In this paper, the algorithms use the generic-group oracle defined
below, and the probabilities also range over its fresh random
encoding. Decryption and the adversary use the group context included
in the ciphertext. We prove a stronger quantitative guarantee for every
classical adversary making at most a specified number of oracle queries,
with no restriction on its local running time. The restrictions on
advice are specified below.

\paragraph{Statements and witnesses.}
It suffices to give witness encryption for Boolean circuit satisfiability.
A polynomial-time NP verifier can be compiled into such a circuit,
preserving witnesses, and every satisfying input extends to the gate
values. We use \(\varphi\) for the resulting satisfiability statement.
An independent security parameter is accommodated by padding the
algebraic encoding to at least \(\lambda\) variables. The effective
size is \(n=\max\{2,\lambda,|\varphi|\}\). For a fixed NP relation,
the circuit size is polynomial in the original statement and witness
bounds; this polynomial change preserves all asymptotic guarantees in
\cref{thm:main}.

\paragraph{A fresh generic cyclic group.}
For each encryption and prime \(p\), the model supplies a fresh,
uniformly random injection
\[
 \sigma:\Fp\longrightarrow\bits^\ell,
 \qquad \ell=2\lceil\log p\rceil .
\]
We write \(g^a=\sigma(a)\), so \(g=\sigma(1)\). The group oracle takes
two strings and an addition or subtraction instruction, and returns
\(\sigma(a+b)\) or \(\sigma(a-b)\) on valid encodings \(\sigma(a),\sigma(b)\);
it returns \(\bot\) if either operand is invalid. Equality is ordinary
string equality. Known scalar multiples use \(O(\log p)\) group
operations. This is an ordinary generic group: no oracle multiplies
two unknown exponents.

A ciphertext includes access to its fresh group context, denoted
\(\Gamma\). The random encoding table is part of the oracle model.
The algorithms transmit only polynomially many handles (encoding strings), not the table.
An adversary may inspect strings and use arbitrary local computation;
oracle calls, including calls on guessed strings, are charged to its
query budget. Advice may depend on the statement but not on the fresh
encoding or hidden encryption coins.

\Needspace{28\baselineskip}
\section{Witness encryption}\label{sec:we}
We construct the scheme from the following matrix-space guarantee,
which we prove in \cref{sec:minrank}.

\begin{theorem}[Boolean-factor MinRank reduction]\label{thm:minrank}
Given a satisfiability instance \(\varphi\), choose a padding parameter
\(N\ge\max\{2,|\varphi|\}\) large enough to hold its Boolean quadratic
encoding, and put \(R=\lfloor\log N\rfloor\).
For every prime \(p>\max\{2^N,2NR\}\), a deterministic algorithm running in
\(\poly(N,\log p)\) time outputs an ordered basis
\(M_1,\ldots,M_k\) of a space
\(\Smat\subseteq\Fp^{m\times m}\), with
\[
 m=(N+1)(2NR+1)\binom{2R}{R}=O(N^4\log N),
\]
such that:
\begin{enumerate}
\item If \(\varphi\) is satisfiable, every satisfying witness efficiently
yields \(a\in\Fp^k\) and \(u,v\in\Fp^m\) satisfying
\[
 \sum_{j=1}^k a_jM_j=uv^T\ne0,
 \qquad v\in\bits^m,\qquad \wt(v)\le N+1.
\]
\item If \(\varphi\) is unsatisfiable, every nonzero matrix in
\(\Smat\) has rank at least \(R+1\).
\end{enumerate}
The matrix order \(m\) depends only on \(N\) and can be computed in polynomial time before choosing \(p\).
\end{theorem}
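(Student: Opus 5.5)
The plan is to obtain \(\Smat\) by three successive linear-algebraic lifts of a Boolean quadratic system. Completeness will be read off from an explicit rank-one matrix. Soundness will be proved by contraposition: from any nonzero matrix of rank at most \(R\) in \(\Smat\), extract a satisfying assignment.

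\textbf{Step 1 (quadratic encoding).} Compile \(\varphi\) into Boolean variables \(x_1,\dots,x_N\), with \(x_0=1\), and homogeneous quadratic constraints \(Q_i(x)=0\) over \(\Fp\). Include the Booleanity constraints \(x_j^2=x_0x_j\). A satisfying assignment gives \(y=(1,x)\in\bits^{N+1}\) with all \(Q_i(y)=0\). The factor \(N+1\) in \(m\) and the bound \(\wt(v)\le N+1\) indicate that \(v\) will be \(y\), or \(y\) placed in one block of a larger index set.

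\textbf{Step 2 (moment and bit-pair structure).} Index rows by pairs consisting of
\begin{itemize}
\item a variable index in \(\{0,\dots,N\}\),
\item an evaluation index \(t\in\{0,\dots,2NR\}\), and
\item an \(R\)-subset of \([2R]\), accounting for \(\binom{2R}{R}\).
\end{itemize}
Define \(\Smat\) as the set of matrices \(X\) satisfying homogeneous linear equations that encode three things. First, each \((N+1)\times(N+1)\) block behaves like a moment matrix \(Y\) with \(\langle Q_i,Y\rangle=0\) and \(Y_{jj}=Y_{0j}\). Second, the blocks for different evaluation indices are the evaluations of a polynomial family at geometric points \(g^t\). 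Third, the subset blocks form weighted bit-pair tables indexed by the bits of \(R=\lfloor\log N\rfloor\)-bit variable indices. The equations must be chosen so that the left factor absorbs all weights, which keeps the right factor \(v\) Boolean. For completeness, I will check directly that \(u\,y^T\), with \(u\) the vector of weights and geometric powers, satisfies every defining equation. The coefficients \(a\) are then obtained by solving a linear system in the basis. The basis itself is computed in \(\poly(N,\log p)\) time by Gaussian elimination on the defining equations, or by propagating coefficient spaces through the two Boolean transitions as in~\cite{RazShpilka2005}. Since \(m\) depends only on \(N\), it can be computed before \(p\) is chosen.

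\textbf{Step 3 (rank preservation via geometric evaluation).} Take any \(X\in\Smat\) of rank \(r\le R\). Pick \(r+1\) rows and columns and form a minor. After eliminating leading terms, this minor is a determinant of geometrically sampled polynomials of degree below \(2NR+1\). Its leading coefficient should be a nonzero Vandermonde determinant in distinct powers \(g^{t}\). The points are distinct because \(p>\max\{2^N,2NR\}\). It follows that low rank of \(X\) forces low rank of every underlying coefficient block, not merely of the evaluated matrix.

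\textbf{Step 4 (soundness, the main obstacle).} The coefficient blocks now have a column space of dimension at most \(R\), so the relevant coordinates take at most \(2^R\le N\) distinct ``row types.'' I will use Boolean interpolation to build selector polynomials on this small coordinate space. These selectors isolate a single type. Pairing each selector with the bit-pair tables, and using the Booleanity constraints, should show that the isolated type is the moment vector of a genuine Boolean point \(y\) with \(Q_i(y)=0\). This is a Nullstellensatz-style consequence argument: every degree-\(\le 2R\) consequence of the constraints must be respected by the tables. If \(\varphi\) is unsatisfiable, every selector output must vanish, and hence \(X=0\), which gives rank at least \(R+1\) for every nonzero matrix.

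I expect the hardest part to be calibrating the bit-pair tables. They must be rich enough that the selectors of degree \(\le 2R\) are expressible inside \(\Smat\), so that \(\binom{2R}{R}\) subset blocks suffice. They must also be sparse enough that the rank-one witness keeps a Boolean right factor. My first attempt would use the tables indexed by \(R\)-subsets of \(2R\) bit positions as a multilinear basis for the selectors.
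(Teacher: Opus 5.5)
There is a genuine gap. The proposal never specifies a space on which a soundness argument can run, and Steps 3--4 do not contain the mechanism that makes soundness work. The key structural feature is that $\Smat$ lies inside $\Vmat=\Span\{A(b):b\in\bits^N\}$, where $A(b)$ stacks $h(b)v(b)v(b)^T$ over all weights $h$. Then every block of $A=\sum_b\lambda_bA(b)$ uses the \emph{same} coefficients $\lambda_b$. Only because of this does a column relation of $A$ hold in every weighted block at once and become a replacement rule $\sum_b\lambda_b k(b)a(b)H(\ell(b))=0$.

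Local equations that make each block ``moment-like'' do not capture this. The block $Y=e_ie_j^T+e_je_i^T$, with $i\ne j$ indexing two bits whose product occurs in no equation, is symmetric, satisfies $Y_{kk}=Y_{0k}=0$, passes every test, and has rank two for every instance. Nothing in your description of the cross-block equations excludes stacks built from such blocks. ``Gaussian elimination on the defining equations'' also presupposes equations that force $X$ into $\Vmat$, which you do not give. The paper instead computes a basis of $\Vmat$ by propagating the span bit by bit through the linear updates $A\mapsto U_{i,\varepsilon}AV_{i,\varepsilon}^T$, and then intersects with the tests. Likewise, $\binom{2R}{R}$ does not count $R$-subsets tied to bits of variable indices. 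It counts the monomials of degree at most $R$ in the $R$ linear forms $\ell_{j,t}(b)=\sum_i(2^{j-1}t)^ib_i$, and this is what makes every degree-$\le R$ polynomial in those forms available as a weight.

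Step 3 applies the determinant argument to the wrong object. Any minor of $X$ trivially has rank at most $\rank X$, and interpolation across $t$ is likewise trivially rank-preserving. The Vandermonde argument is needed for something else: for all but at most $N\rho$ values of $t$, the combinations $\sum_i(2^{j-1}t)^ic_i$, $j\le\rho$, of the \emph{columns} of $A$ form a basis of its unknown column space. Here the geometric progression runs over $j$ at a fixed $t$, and $p>2^N$ keeps the nodes $2^{e_i}$ distinct. Consequently the coordinates $w_j(b)$ relative to a column basis are linear in the $\ell_{j,t}(b)$, so selectors of degree $\rho\le R$ in them are weights. The $2NR+1$ sample points exist so that one $t$ avoids the roots of both $\Delta$ and a nonzero entry polynomial of degree $\le NR$. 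That choice is what produces the final contradiction.

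Step 4 has a similar problem. A column space of dimension $\le R$ over $\Fp$ has $p^R$ points, not $2^R$ ``row types.'' The new coordinates are Boolean only inside weighted sums, via $\sum_b\lambda_b(w_j(b)^2-w_j(b))F(w(b))=0$ for $\deg F\le R$. This identity is derived from the replacement rules together with $b_i^2=b_i$. One then multilinearizes polynomials of degree $\le R+2$, which reduces every relevant sum to $2^\rho$ cube weights $\mu_e$. Each $\mu_e$ is killed by multiplying the selector $E_e$ by a failing equation or a non-Boolean $L_i(e)$. Without these ingredients, your Step 4 describes the desired conclusion rather than proving it.
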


The scheme uses only this interface: an independent basis, a rank-one
witness with a Boolean factor, and a rank lower bound on false
statements.

\subsection{Parameters and algorithms}
Given \((1^\lambda,\varphi)\), put \(n=\max\{2,\lambda,|\varphi|\}\)
and form the Boolean quadratic encoding. Let \(N\) be the larger of
\(n\) and the number of variables in that encoding, and pad with unused
variables to reach \(N\). Thus \(N=\Theta(n)\).
Compute the matrix order \(m\) supplied by \cref{thm:minrank}, and put
\(d=1+\lfloor\log N\rfloor\). Choose a prime
\(p\in[2^m,2^{m+1})\). With \(R=d-1\), the displayed formula for
\(m\) gives \(m>N\) and \(m>2NR\). Hence
\(p\ge2^m>\max\{2^N,2NR\}\), as required by the reduction.
For this prime, compute the deterministic
basis \(M_1,\ldots,M_k\) from \cref{thm:minrank}; in particular,
\(k\le m^2\). The basis is determined by the public input and \(p\),
so it need not be included in the ciphertext.

We first describe the algorithms with exact sampling. The bounded-time
implementation immediately below adds a common abort outcome.

\paragraph{Encryption \(\Enc(1^\lambda,\varphi,\beta)\).}
Obtain a fresh group context \(\Gamma\) of order \(p\) and independently
sample
\[
 s\leftarrow\Fp^m,\qquad
 r\leftarrow\{0,\ldots,m-1\}^m,\qquad
 \eta\leftarrow\Fp^k.
\]
Output
\begin{equation}\label{eq:ciphertext}
 C=(p,\Gamma,g,X_1,\ldots,X_m,Y_1,\ldots,Y_k),
 \quad
 X_i=g^{s_i},\quad
 Y_j=
 \begin{cases}
  g^{s^TM_jr},&\beta=0,\\
  g^{\eta_j},&\beta=1.
 \end{cases}
\end{equation}
Both messages use the same samplers, including the unused randomness.

\paragraph{Decryption \(\Dec(1^\lambda,\varphi,w,C)\).}
Check the witness. For a valid witness, compute \(a,u,v\) as in
\cref{thm:minrank}, and form
\[
 H=\prod_{i=1}^mX_i^{u_i},\qquad
 Z=\prod_{j=1}^kY_j^{a_j},\qquad
 B=(m-1)\wt(v).
\]
Return zero if \(Z\) is one of
\(1,H,H^2,\ldots,H^B\), and return one otherwise.
Compute this list by successive group operations.
On the common abort outcome return zero; on an invalid witness or
malformed input return \(\bot\).

All steps are polynomial time. Scalar exponentiation costs
\(O(\log p)\) group operations, and \(B\le m(N+1)\).
The ciphertext has \(O(m+k)\) handles, each of \(O(\log p)\) bits,
and \(\log p=O(m)\). This proves polynomial communication as well.

\begin{lemma}[Bounded sampling]\label{lem:sampling}
The scheme has a bounded polynomial-time implementation with sampling
failure probability \(\delta=2^{-\Omega(m)}\). Conditioned on an
accepted prime and successful sampling, the distributions in
\cref{eq:ciphertext} are exact. The failure event and public-prime
distribution are the same for both messages.
\end{lemma}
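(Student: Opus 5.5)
We must make every random choice in the scheme, namely the prime $p\in[2^m,2^{m+1})$, the uniform elements of $\Fp$, the vector $r\in\{0,\ldots,m-1\}^m$, and the encoding $\sigma$, with a bounded number of random bits. Each primitive sampler gets a fixed polynomial budget of attempts. If any sampler exhausts its budget, the whole algorithm outputs the common abort outcome. The sequence of samplers is fixed before $\beta$ is used, and both messages draw $s$, $r$ and $\eta$. So the failure event is a function of coins that do not depend on $\beta$. Its distribution, and the distribution of the accepted prime, are therefore identical for $\beta=0$ and $\beta=1$.

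\textbf{Prime.} Sample integers uniformly from $[2^m,2^{m+1})$ and test each for primality with a deterministic polynomial-time test (AKS). Allow $T=c\,m^2$ trials. By the prime number theorem, or already by Bertrand-type bounds with the explicit density $\Omega(1/m)$ of primes in $[2^m,2^{m+1})$ for large $m$, each trial succeeds with probability at least $c'/m$. All $T$ trials fail with probability at most $(1-c'/m)^{cm^2}=2^{-\Omega(m)}$. Conditioned on success, the output is uniform over the primes in the interval. This is the ``accepted prime'' in the statement, and it does not depend on $\beta$.

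\textbf{Field elements and $r$.} To sample uniformly from $\{0,\ldots,q-1\}$ with $q\le p<2^{m+1}$, use rejection sampling. Draw $\lceil\log q\rceil$ bits and accept if the value is below $q$. Each attempt succeeds with probability greater than $1/2$, and we allow $m$ attempts. Conditioned on acceptance, the output is exactly uniform. A single sampler fails with probability at most $2^{-m}$. There are $m+m+k\le 2m+m^2$ such samples (the coordinates of $s$, $r$, $\eta$), so a union bound gives total failure at most $\poly(m)2^{-m}=2^{-\Omega(m)}$.

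\textbf{The encoding.} The oracle must supply a uniformly random injection $\sigma:\Fp\to\bits^\ell$. This is part of the oracle model, not something the encryptor samples. So it contributes no failure, and the encryptor only issues polynomially many queries for $g^{s_i}$ and $g^{Y_j\text{-exponents}}$ by fixed-exponent scalar multiplication, using $O(\log p)=O(m)$ operations each. If one instead wants the encryptor to realize $\sigma$ lazily, one samples fresh $\ell$-bit strings and rejects collisions with previously assigned strings. With $\ell=2\lceil\log p\rceil$ and at most $\poly(m)$ assignments, a collision has probability $\poly(m)/p$ per draw. Allowing $m$ retries keeps the failure probability at $2^{-\Omega(m)}$, and conditioned on success the lazily built partial injection is distributed as the restriction of a uniform injection.

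\textbf{Exactness.} Conditioned on no failure, the samplers are independent and each is exactly uniform on its target set. So the joint law of $(p,s,r,\eta,\sigma)$ equals the ideal law with $p$ uniform among the accepted primes. Hence \cref{eq:ciphertext} holds exactly for each $\beta$. A union bound over the three failure sources gives $\delta=2^{-\Omega(m)}$.

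The main obstacle is the prime-density step: we need an explicit lower bound on the number of primes in $[2^m,2^{m+1})$ valid for all sufficiently large $m$, with small $m$ handled by a constant adjustment. I would cite an explicit Chebyshev/Rosser–Schoenfeld estimate for this.
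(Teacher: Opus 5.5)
Your proposal is correct and follows essentially the same route as the paper: AKS-tested uniform candidates for the prime with $\Theta(m^2)$ trials, bounded rejection sampling with $m$ attempts per coordinate plus a union bound, and a fixed $\beta$-independent schedule with a common abort. Two small points. First, conditioning on all samplers succeeding slightly reweights $p$, since the field samplers' acceptance rate $p/2^{m+1}$ depends on $p$, so $p$ is not exactly uniform among primes; as the paper notes, this reweighting is common to both messages, which is all the lemma needs. Second, your lazy-$\sigma$ aside is unnecessary because the model supplies $\sigma$, and in that variant the failure event would actually depend on $\beta$ through which exponents get encoded.
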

\begin{proof}
Test \(m^2\) independent uniform candidates from
\([2^m,2^{m+1})\) using a deterministic polynomial-time primality
test~\cite{AgrawalKayalSaxena2004}. Prime density
\(\Omega(1/m)\) gives failure probability \(2^{-\Omega(m)}\).
For each field or grid coordinate, sample a binary integer with the
minimum sufficient bit length, reject if it is outside the desired
range, and allow \(m\) attempts. Acceptance probability is at least
one half, and a successful value is exactly uniform. There are
polynomially many coordinates, so their total failure probability
is also \(2^{-\Omega(m)}\).

Use independent randomness for all coordinates and the same schedule
for both messages. Conditional on success, the coordinates remain
independent and uniform; any reweighting of the accepted prime is
common to both messages. On any failure output a fixed abort symbol.
\end{proof}

\subsection{Correctness}
\begin{proposition}[Correctness]\label{prop:correctness}
For every satisfying witness and either message, decryption fails with
probability at most
\[
 \delta+\bigl(m(N+1)+1\bigr)2^{-m}=2^{-\Omega(m)}.
\]
\end{proposition}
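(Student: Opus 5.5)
Condition first on successful sampling; by \cref{lem:sampling} the sampling failure has probability at most \(\delta\), and on abort decryption returns zero. That output is correct for \(\beta=0\), and in any case we simply charge the failure to \(\delta\). Conditioned on success, the ciphertext distributions in \cref{eq:ciphertext} are exact. So it remains to bound the error for each message separately, with \(s,r,\eta\) uniform and independent.

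\emph{Case \(\beta=0\).} Take \(a,u,v\) from \cref{thm:minrank}, so \(\sum_j a_jM_j=uv^T\). Then, in the exponent,
\[
 Z=g^{\sum_j a_j s^TM_jr}=g^{s^Tuv^Tr}=H^{v^Tr},
 \qquad H=g^{u^Ts}.
\]
Viewed over the integers, \(v^Tr=\sum_{i:v_i=1}r_i\) lies in \(\{0,\ldots,(m-1)\wt(v)\}=\{0,\ldots,B\}\), because \(v\) is Boolean and \(r_i\le m-1\). Hence \(Z\) appears in the list \(1,H,\ldots,H^B\), and decryption outputs zero with probability one. This holds even if \(H=1\).

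\emph{Case \(\beta=1\).} Here \(Z=g^{a^T\eta}\) and \(H=g^{u^Ts}\). Decryption errs exactly when \(a^T\eta\in\{t\,u^Ts: 0\le t\le B\}\). The vector \(a\) is nonzero, since \(\sum_j a_jM_j\ne0\) and the \(M_j\) form a basis. Because \(\eta\) is uniform and independent of \(s\), the value \(a^T\eta\) is uniform on \(\Fp\) given \(s\). The bad set has at most \(B+1\le m(N+1)+1\) elements, using \(\wt(v)\le N+1\). So the error probability is at most \((m(N+1)+1)/p\le (m(N+1)+1)2^{-m}\), since \(p\ge 2^m\).

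Adding the two contributions gives the stated bound \(\delta+(m(N+1)+1)2^{-m}\). Since \(m=O(N^4\log N)\) is polynomial, this is \(2^{-\Omega(m)}\). The main subtlety is the integer-versus-field bookkeeping in the \(\beta=0\) case. The exponent \(v^Tr\) must be computed as a nonnegative integer at most \(B\) with no wraparound, so that ``\(Z=H^{t}\) for some \(t\le B\)'' is literally a list membership test. This is automatic because \(B<p\). A second point is that conditioning on the accepted prime does not affect either bound: the \(\beta=1\) estimate uses only \(p\ge 2^m\), and it holds uniformly for every accepted prime.
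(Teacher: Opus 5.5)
Your proof is correct and follows the paper's argument: condition on successful sampling at an accepted prime, show the $\beta=0$ case never errs because the integer $v^Tr$ lies in $[0,B]$, and bound the $\beta=1$ error by $(B+1)/p\le(m(N+1)+1)2^{-m}$ using that $a\ne0$ makes $a^T\eta$ uniform and independent of $H$. One small correction concerns the final estimate $2^{-\Omega(m)}$: it needs $N<m$, which the formula for $m$ gives and which yields $m(N+1)+1\le 2m^2$, and it does not follow from $m$ being polynomial in $N$ as you wrote.
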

\begin{proof}
Fix an accepted prime \(p\) and condition on successful sampling.
For an encryption of zero,
\[
 Z=g^{s^T(\sum_j a_jM_j)r}
   =g^{(s^Tu)(v^Tr)}
   =H^{v^Tr}.
\]
The integer \(v^Tr\) lies in \([0,B]\), since \(v\) is Boolean.
The tested list therefore contains \(Z\), even if \(H=1\).

For an encryption of one, \(a\ne0\), because its matrix combination
is nonzero. Hence \(a^T\eta\) is uniform in \(\Fp\) and independent
of \(H\). The probability that \(Z=g^{a^T\eta}\) lies in the tested
list is at most \((B+1)/p\); repeated powers only shorten the list.
Averaging over the prime, adding sampling failure, and using
\(p\ge2^m\) proves the claim.
\end{proof}

\subsection{Rank bounds a single collision}
The security proof starts with a finite-grid estimate. It uses
independence of coordinates, not a polynomial identity testing theorem.

\begin{lemma}[Bilinear collision bound]\label{lem:collision}
Let \(M\in\Fp^{m\times m}\) have rank \(\rho\), and fix
\(\alpha\in\Fp^m\) and \(c\in\Fp\). For independent
\(s\leftarrow\Fp^m\) and \(r\leftarrow\{0,\ldots,m-1\}^m\), where
\(m<p\),
\[
 \Pr[c+s^T(\alpha+Mr)=0]\le m^{-\rho}+p^{-1}.
\]
If \(M=0\) and \((c,\alpha)\ne(0,0)\), the bound improves to \(p^{-1}\).
\end{lemma}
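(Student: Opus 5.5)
Condition on \(r\) first and then use the uniformity of \(s\). For a fixed \(r\), the quantity \(c+s^T w\) with \(w=\alpha+Mr\) is an affine function of the uniform vector \(s\in\Fp^m\). If \(w\ne0\), the map \(s\mapsto s^Tw\) is a nonzero linear functional. Then \(c+s^Tw\) is uniform on \(\Fp\), and it vanishes with probability exactly \(p^{-1}\). If \(w=0\), the expression equals \(c\) and vanishes with probability at most one. Hence
\[
 \Pr[c+s^T(\alpha+Mr)=0]\le \Pr_r[\alpha+Mr=0]+p^{-1}.
\]
It then suffices to show \(\Pr_r[Mr=-\alpha]\le m^{-\rho}\) when \(r\) is uniform on the grid \(\{0,\ldots,m-1\}^m\).

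The special case is now immediate. If \(M=0\) and \(\alpha\ne0\), then \(w=\alpha\ne0\) always, so the probability is exactly \(p^{-1}\). If \(M=0\), \(\alpha=0\) and \(c\ne0\), the expression is the nonzero constant \(c\), so the probability is \(0\).

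For the grid bound, \(M\) has rank \(\rho\), so some \(\rho\) columns of \(M\), indexed by \(J\) with \(|J|=\rho\), are linearly independent. Fix the coordinates \(r_i\) for \(i\notin J\) arbitrarily. The event \(M_J r_J=-\alpha-M_{\bar J}r_{\bar J}\) then has at most one solution \(r_J\in\Fp^\rho\), because \(M_J\) has full column rank. We need the grid values to map injectively into \(\Fp\), which holds since \(m<p\): the integers \(0,\ldots,m-1\) are distinct elements of \(\Fp\). So at most one point of the grid \(\{0,\ldots,m-1\}^J\) satisfies the event. That point has probability \(m^{-\rho}\), because the coordinates \(r_i\), \(i\in J\), are independent and uniform on a set of size \(m\) and are independent of the fixed coordinates. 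Averaging over the coordinates outside \(J\) gives \(\Pr_r[Mr=-\alpha]\le m^{-\rho}\).

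The argument only needs independence of coordinates, as the text before the statement indicates. The one step that takes care is choosing the independent column set and conditioning on the complementary coordinates. The rest is the standard fact that a nonzero linear form in uniform \(s\) is uniform.
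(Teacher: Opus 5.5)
Your proof is correct and follows essentially the same route as the paper. Both split on the grid event $\alpha+Mr=0$ and bound it by $m^{-\rho}$ by fixing the coordinates of $r$ outside $\rho$ independent columns, using $m<p$ so the grid values are distinct in $\mathbb F_p$; off that event, a nonconstant affine form in uniform $s$ vanishes with probability $p^{-1}$, which also gives the $M=0$ refinement (including the $\alpha=0$, $c\neq0$ subcase you handle explicitly).
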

\begin{proof}
Choose \(\rho\) independent columns of \(M\) and fix the coordinates
of \(r\) outside those columns. At most one choice of the remaining
coordinates can satisfy \(Mr=-\alpha\). Since their grid values are
distinct field elements, this event has probability at most
\(m^{-\rho}\). Off that event, the expression is a nonconstant
affine linear form in uniform \(s\), so it vanishes with probability
\(1/p\). The final assertion is the same affine-linear calculation.
\end{proof}

\subsection{Adaptive generic security}
\begin{theorem}[Quantitative generic security]\label{thm:security}
Fix an unsatisfiable \(\varphi\) and any accepted prime \(p\).
Against a classical generic adversary making \(Q\) oracle queries,
the distinguishing advantage of the scheme, conditional on successful
sampling, is at most
\begin{equation}\label{eq:security}
 O\!\left((Q+m+k)^2\bigl(m^{-d}+p^{-1}\bigr)\right).
\end{equation}
This holds with arbitrary local computation and arbitrary-string
oracle inputs. Averaging over the prime and including the common
abort outcome gives the same bound with \(p^{-1}\) replaced by \(2^{-m}\).
\end{theorem}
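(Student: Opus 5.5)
We follow the Shoup-style random-encoding argument. The first step is to replace the real oracle with a simulator that works symbolically. The simulator keeps a table that associates each handle with a formal label. A label is an affine-linear polynomial in the formal variables \(S_1,\ldots,S_m\) (standing for \(s\)) and \(E_1,\ldots,E_k\) (standing for the exponents of the \(Y_j\)). The initial handles \(g,X_i,Y_j\) get the labels \(1,S_i,E_j\). Each oracle addition or subtraction on known handles returns the handle of the sum or difference of labels; a fresh random string is issued whenever the resulting label has not appeared before. Queries on strings that are not yet in the table are answered by \(\bot\), and we charge the event that such a guessed string is actually a valid encoding. Since \(\ell=2\lceil\log p\rceil\), this event has probability at most \(p/2^\ell\le p^{-1}\) per query. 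In the simulated game, the adversary's view is the same for \(\beta=0\) and \(\beta=1\). So by the standard identical-until-bad argument, the advantage is at most twice the probability of the bad event in each real world. The bad event is that two distinct labels among the at most \(Q+m+k+1\) created take equal values under the real assignment. Because labels are generated adaptively but the simulated view is independent of the secrets, we fix the adversary's coins and bound each pair separately. A union bound over \(O((Q+m+k)^2)\) pairs then reduces everything to bounding a single collision.

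The difference of two distinct labels is a nonzero affine form \(c+\alpha^TS+\gamma^TE\). For \(\beta=1\), the \(E_j\) are uniform \(\eta_j\), independent of \(s\). So the form vanishes with probability \(p^{-1}\): either \(\gamma\neq0\), or we can apply \cref{lem:collision} with \(M=0\). For \(\beta=0\) we substitute \(E_j=s^TM_jr\). The form becomes \(c+s^T(\alpha+Mr)\) with \(M=\sum_j\gamma_jM_j\in\Smat\). If \(\gamma\neq0\), then \(M\neq0\), because the \(M_j\) form a linearly independent basis. Since \(\varphi\) is unsatisfiable, \cref{thm:minrank} gives \(\rank M\ge R+1=d\). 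The condition \(m<p\) holds because \(p\ge2^m\). \Cref{lem:collision} therefore bounds the collision probability by \(m^{-d}+p^{-1}\). If \(\gamma=0\), the final clause of \cref{lem:collision} gives \(p^{-1}\). Summing over all pairs yields \cref{eq:security}.

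Finally, we average over the accepted prime. By \cref{lem:sampling}, the conditional distributions are exact and the prime distribution and abort event are common to both messages. The abort therefore contributes nothing to the advantage, and \(p\ge2^m\) lets us replace \(p^{-1}\) by \(2^{-m}\).

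The main obstacle is making the simulation rigorous against adaptive adversaries that submit arbitrary strings and inspect encodings. The labels are chosen adaptively, so the union bound is valid only once we argue that, up to the first collision, the real and simulated transcripts coincide. That requires the labels queried to be determined by the adversary's coins and the simulated view alone, independent of \(s,r,\eta\). We would handle this by the standard argument of running the simulation to completion, collecting all at most \(Q+m+k+1\) labels, and noting that the real game deviates only if some pair among these fixed labels collides or some guessed string hits a valid encoding. The guessing term is absorbed into \(Q\cdot p^{-1}\).
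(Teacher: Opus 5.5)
Your proposal is correct and follows essentially the same route as the paper. The paper also uses a symbolic simulator over affine forms, fixes the adversary's coins and the simulator's labels so that the adaptive list of forms is independent of \(s,r,\eta\), bounds each pair by \cref{lem:collision} with \(\rank M_\gamma\ge d\) for \(\gamma\ne0\), and adds a guessing term before averaging over the common prime and abort distribution. The paper makes two details explicit that you gloss over: rejected guessed strings must be excluded from later fresh labels, so that the simulation matches the real injection until the bad event, and the per-guess bound is really \(p/\bigl(2^\ell-O(Q+m+k)\bigr)\) rather than \(p/2^\ell\); neither changes more than constants once one restricts to \(Q+m+k\le p\).
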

\begin{proof}
We couple both ciphertext distributions to one symbolic experiment.
Introduce formal variables $S_1,\ldots,S_m$ and $Z_1,\ldots,Z_k$.
Initially associate the generator, identity, and ciphertext handles
with the formal affine linear expressions
\[
 1,\ 0,\ S_1,\ldots,S_m,\ Z_1,\ldots,Z_k .
\]
Writing \(S=(S_1,\ldots,S_m)^T\) and \(Z=(Z_1,\ldots,Z_k)^T\),
the simulator stores coefficient vectors of affine forms
\[
 F=c+\alpha^TS+\gamma^TZ.
\]
Here \(c\in\Fp\), \(\alpha\in\Fp^m\), and \(\gamma\in\Fp^k\).
Identical vectors use the same label; a new vector receives a fresh
uniform label distinct from those already assigned. A group operation
adds or subtracts coefficient vectors. There are at most
\(H=m+k+Q+2\) such forms.

First consider oracle inputs using already assigned labels.
Fix the adversary's coins and all labels of the symbolic simulation.
This fixes its entire adaptive list of forms independently of
\(s,r,\eta\). We bound collisions on this fixed symbolic list,
rather than condition a real transcript on having avoided earlier
collisions.

In the one distribution, substitute \(S=s\) and \(Z=\eta\).
The difference of distinct forms is a nonzero affine linear form in
uniform \((s,\eta)\), so it evaluates to zero with probability at most
\(1/p\). In the zero distribution, the same difference evaluates to
\[
 c+s^T(\alpha+M_\gamma r),
 \qquad M_\gamma=\sum_j\gamma_jM_j.
\]
If \(\gamma\ne0\), independence of the basis makes \(M_\gamma\ne0\),
and unsatisfiability gives \(\rank M_\gamma\ge d\).
\Cref{lem:collision} bounds the collision probability by
\(m^{-d}+p^{-1}\). If \(\gamma=0\), the nonzero affine-linear bound
\(p^{-1}\) applies.

Until two distinct forms evaluate to the same exponent, either real
experiment can use exactly the labels of the common simulator.
Writing \(\Adv\) for the distinguishing advantage, a union bound over
the pairs in both experiments therefore gives
\[
 \Adv\le \binom{H}{2}\bigl(m^{-d}+2p^{-1}\bigr)
\]
for inputs using known labels. This coupling accounts for full
adaptivity and for decisions based on the bit patterns of labels.

Now allow arbitrary-string inputs. Extend the simulator to reject
any candidate operand that is not already assigned. Exclude each
such rejected string from subsequent fresh labels. There are at
most \(2Q\) candidates. Until one is a valid unseen encoding, the
remaining valid labels are exchangeable among unassigned,
unexcluded strings. Since the ambient label space has size
\(U=2^\ell\ge p^2\), each new candidate is valid with conditional
probability at most
\[
 \frac{p}{U-H-2Q}.
\]
It suffices to consider \(Q+m+k\le p\), since otherwise the claimed
bound exceeds one. In this range the denominator is \(\Omega(p^2)\),
and coupling both experiments adds \(O(Q/p)\), which is absorbed
by \cref{eq:security}. Declaring either guessed operand valid to be
a bad event also covers an oracle that returns a single undifferentiated
invalid-input symbol. Arbitrary computation on strings provides no
additional, uncharged validity oracle.

All collision bounds hold for every accepted prime, with the exact
conditional sampling distributions supplied by \cref{lem:sampling}.
The two messages have identical abort and public-prime distributions.
The abort branch contributes zero distinguishing advantage, so averaging
preserves the bound without an additive sampling-error term.
\end{proof}

\subsection{Putting the parameters together}
\begin{proof}[Proof of \cref{thm:main}]
The reduction uses \(N=\Theta(n)\),
\(d=\Theta(\log n)\), and \(N\le m=N^{O(1)}\), while \(k\le m^2\)
and \(p\ge2^m\). Consequently
\[
 m^{-d}=n^{-\Theta(\log n)},
 \qquad
 p^{-1}=2^{-n^{\Omega(1)}}.
\]
For a sufficiently small positive constant in the exponent, choose
\(Q_*(n)=n^{\Theta(\log n)}\). Then
\[
 (Q_*(n)+m+k)^2m^{-d}=n^{-\Theta(\log n)},
\]
and the \(p^{-1}\) term is smaller. This gives a function
\(\varepsilon(n)=n^{-\Theta(\log n)}\) satisfying the security
claim.
Correctness follows from \cref{prop:correctness} and \(m=n^{\Omega(1)}\).
The algorithms and communication are polynomial in \(n\), by the
matrix construction and the bounded sampling implementation.
Finally, witness-preserving circuit encodings give the claim for
every NP relation.
\end{proof}

The construction uses the two matrix promises in separate, concrete
ways: bounded coordinates allow decryption to enumerate candidate
exponents, and rank prevents a generic adversary from encountering
useful accidental equalities. Stronger rank gaps or smaller encodings
would improve the quantitative tradeoff through \cref{eq:security}
without changing the encryption algorithm.

\section{A matrix space with Boolean rank-one witnesses}\label{sec:minrank}
We prove \cref{thm:minrank} by constructing the matrix space used
by the encryption scheme. The proof uses linear algebra over a field,
elementary polynomials, and Boolean circuits. We begin with the rank-one
case, then explain cancellation and the purpose of weighted tables.
After choosing the weights, we prove soundness and give the exact-span
algorithm. Throughout this section the prime is supplied to the
reduction; the large-prime hypothesis will be used to keep certain
powers of two and sample values distinct.
We allow any integer rank threshold $1\le R\le N$ in the construction,
and take $R=\lfloor\log N\rfloor$ when proving \cref{thm:minrank}.

\subsection{A table that contains one assignment}\label{sec:tables}

Write the input equations as
\[
 q_1(b)=\cdots=q_J(b)=0,\qquad b=(b_1,\ldots,b_N)\in\bits^N,
\]
where every $q_s$ has degree at most two over $\Fp$. 
For a Boolean circuit, an AND gate with input bits $b_i,b_j$ and output
bit $b_k$ gives $b_k-b_i b_j=0$. A NOT gate gives $b_k+b_i-1=0$,
and acceptance gives $b_{\rm out}-1=0$.
Consequently circuit satisfiability has this form with polynomial overhead;
a satisfying circuit input determines all the gate bits efficiently.
There are $O(|\varphi|)$ variables and equations. Pad with unused bits
to reach the chosen $N\ge\max\{2,|\varphi|\}$, as in
\cref{thm:minrank}. All assignments used to define our matrices will be Boolean.

Set $b_0=1$. This is an indexing convention, not an additional unknown bit.
Define the column vector and its table of pairwise products by
\[
 v(b)=(1,b_1,\ldots,b_N)^T,
 \qquad M(b)=v(b)v(b)^T.
\]
Rows and columns of this table are indexed by $0,\ldots,N$. Thus
\[
 M(b)_{ij}=b_i b_j,\qquad
 M(b)_{00}=1,\qquad M(b)_{0i}=M(b)_{ii}=b_i.
\]
The first coordinate lets one table contain constants, individual bits,
and products of two bits. It also ensures that $M(b)$ is nonzero.

\paragraph{An equation becomes a linear test of the table.}

For a quadratic polynomial
\[
 q(b)=a_0+\sum_{i=1}^N a_i b_i
             +\sum_{1\le i\le j\le N}a_{ij}b_i b_j,
\]
define the following linear function of a matrix $M$:
\begin{equation}\label{eq:test}
 \Test_q(M)=a_0M_{00}+\sum_{i=1}^N a_iM_{0i}
                           +\sum_{i\le j}a_{ij}M_{ij}.
\end{equation}
In particular, $\Test_q(M(b))=q(b)$.
The AND equation above is tested by $M_{0k}-M_{ij}$.
The equation is quadratic in the unknown bits, but its test is linear
in the entries of the table.

\paragraph{Why a rank-one table is enough to recover bits.}

Every honest table $M(b)$ is symmetric and has rank one. Conversely, the
identities $M_{ii}=M_{0i}$ nearly force a rank-one table to be honest.

\begin{lemma}\label{lem:rank-one}
Suppose $M\in\Fp^{(N+1)\times(N+1)}$ is nonzero, is symmetric, has rank one, and satisfies
$M_{ii}=M_{0i}$ for $1\le i\le N$. Then
\[
 M=M_{00}v(b)v(b)^T\quad\text{for some }b\in\bits^N,
 \qquad M_{00}\ne0.
\]
If every $\Test_{q_s}(M)$ vanishes, this $b$ satisfies the input equations.
\end{lemma}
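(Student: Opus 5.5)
The plan is to use the structure of symmetric rank-one matrices directly. A nonzero symmetric rank-one matrix over \(\Fp\) can be written as \(M=\lambda ww^T\) with \(w\ne0\) and \(\lambda\in\Fp^\times\). To see this, write \(M=xy^T\) with \(x,y\ne0\). Symmetry gives \(xy^T=yx^T\), so the column spaces agree and \(y=\mu x\) for some \(\mu\ne0\). Then \(M=\mu xx^T\). We will then read off the entries \(M_{ij}=\lambda w_iw_j\).

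The key step is to show \(w_0\ne0\). The diagonal constraint reads \(\lambda w_i^2=\lambda w_0w_i\), that is, \(w_i(w_i-w_0)=0\) for \(1\le i\le N\). Hence each \(w_i\in\{0,w_0\}\). If \(w_0=0\), then every \(w_i=0\), contradicting \(w\ne0\); this is the one place nonzeroness is used. Thus \(w_0\ne0\), and \(M_{00}=\lambda w_0^2\ne0\). Set \(b_i=w_i/w_0\in\{0,1\}\). Then \(w=w_0v(b)\) and \(M=\lambda w_0^2v(b)v(b)^T=M_{00}v(b)v(b)^T\), as claimed.

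For the final assertion, use linearity of \(\Test_q\) together with the identity \(\Test_q(M(b))=q(b)\) noted after \cref{eq:test}. This gives \(\Test_{q_s}(M)=M_{00}\,q_s(b)\). Since \(M_{00}\ne0\), vanishing of every test forces \(q_s(b)=0\) for all \(s\). So \(b\) satisfies the input equations.

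The only mildly delicate point is the decomposition \(M=\lambda ww^T\) for symmetric rank one. A scalar \(\lambda\) is genuinely needed, since \(\lambda\) need not be a square in \(\Fp\). Apart from that, the argument is routine.
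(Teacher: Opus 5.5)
Your proof is correct. It differs from the paper's only in how it uses the rank-one hypothesis.

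The paper never factors $M$; it works entrywise with vanishing $2\times2$ minors:
\begin{itemize}
\item It first rules out $M_{00}=0$. The minor on rows and columns $0,i$, together with symmetry, gives $M_{0i}^2=0$, hence $M_{0i}=M_{ii}=0$. The minors on rows and columns $i,j$ then give $M_{ij}^2=0$, so $M$ would be zero.
\item It then defines $b_i=M_{0i}/M_{00}$ and recovers $M_{ij}=M_{0i}M_{0j}/M_{00}$ from the minor on rows $0,i$ and columns $0,j$.
\item Only at the end does it use $M_{ii}=M_{0i}$ to obtain $b_i^2=b_i$.
\end{itemize}

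You instead first establish the symmetric factorization $M=\lambda ww^T$, and then run the deductions in the opposite order. The diagonal constraint immediately gives $w_i\in\{0,w_0\}$, and it is this Boolean structure that forces $w_0\neq0$.

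Your route pays for the small preliminary argument behind the factorization; you are right to keep the scalar $\lambda$, since it may be a nonsquare. After that, the identity $M=M_{00}v(b)v(b)^T$ is a one-line substitution rather than an entrywise reconstruction. The paper's route needs no factorization lemma, at the cost of a two-stage minor argument before $b$ can be defined.

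The final step, via linearity of $\Test_{q_s}$ and $\Test_{q_s}(M)=M_{00}\,q_s(b)$, is the same in both.
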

\begin{proof}
Every $2\times2$ minor of a rank-one matrix is zero. If $M_{00}=0$,
the minor using rows and columns $0,i$ gives $M_{0i}^2=0$. Thus
$M_{0i}=M_{ii}=0$. The minors using rows and columns $i,j$ then give
$M_{ij}^2=0$, making the whole matrix zero, a contradiction.

We can therefore set $b_i=M_{0i}/M_{00}$. The minor using rows $0,i$
and columns $0,j$ gives $M_{ij}=M_{0i}M_{0j}/M_{00}$.
Now $M_{ii}=M_{0i}$ says $b_i^2=b_i$, whose only solutions in a field
are zero and one. Finally $\Test_{q_s}(M)=M_{00}q_s(b)$.
\end{proof}

This settles the rank-one starting point. The challenge is to exclude
ranks two through $R$ as well.

\subsection{Cancellation and selectors}\label{sec:selectors}

A linear space must contain sums and scalar multiples of its matrices.
We therefore have to understand expressions such as
\[
 M=\sum_{b\in\bits^N}\lambda_b M(b),\qquad \lambda_b\in\Fp.
\]
In particular, passing
an equation test means only that
\[
 \Test_{q_s}(M)=\sum_b\lambda_b q_s(b)=0.
\]
Several failures of an equation can cancel.

\paragraph{The ideal weights would isolate assignments.}

Given $h$ Boolean coordinates, write $w=(w_1,\ldots,w_h)$.
For a particular point $e=(e_1,\ldots,e_h)\in\bits^h$, define
\begin{equation}\label{eq:selector-preview}
 E_e(w)=\prod_{j:e_j=1}w_j\prod_{j:e_j=0}(1-w_j).
\end{equation}
This polynomial equals one at $e$ and zero at every other Boolean point.
Its degree is $h$. For one bit, the two selectors are simply $1-w$ and $w$.

If we could use every selector on the original $N$ bits as a weight,
then applying the same table test $\Test_{q_s}$ to the table
$\sum_b\lambda_b E_e(b)M(b)$ would give
\[
 \sum_b\lambda_b E_e(b)q_s(b)=\lambda_e q_s(e).
\]
Requiring this weighted table to pass the test would therefore say
$\lambda_e q_s(e)=0$.
For an unsatisfiable system, some equation fails at each $e$, so every
$\lambda_e$ would be zero. No cancellation would survive.

The difficulty is the number of selectors: there are $2^N$ of them.
We will use rank to make a smaller collection suffice. To explain how,
we first need to put several weighted tables into one matrix.

\subsection{Put the weighted tables on top of one another}\label{sec:stack}

Temporarily let $\mathcal H$ be any finite list of polynomials in the bits,
called \emph{weights}, including the constant polynomial $1$.
Write its members as $h_1,h_2,\ldots$. For a single assignment, stack
one table for each weight:
\begin{equation}\label{eq:assignment-stack}
 A(b)=\begin{pmatrix}
 h_1(b)v(b)v(b)^T\\
 h_2(b)v(b)v(b)^T\\
 \vdots
 \end{pmatrix}
 =\begin{pmatrix}h_1(b)v(b)\\h_2(b)v(b)\\\vdots\end{pmatrix}v(b)^T.
\end{equation}
This is still a rank-one matrix: all its blocks have the same right
factor $v(b)$. It is nonzero because one block has weight $1$.

Take the span of these \emph{whole} matrices:
\[
 \Vmat=\Span\{A(b):b\in\bits^N\}.
\]
An arbitrary $A\in\Vmat$ has some expression $A=\sum_b\lambda_b A(b)$.
Its block belonging to a weight $h$ is therefore
\begin{equation}\label{eq:block}
 B_h=\sum_b\lambda_b h(b)v(b)v(b)^T.
\end{equation}
\textbf{The same coefficients $\lambda_b$ occur in every block.}
Allowing each block to choose its own coefficients would be a different
construction.

Keep just those matrices whose every block passes every input equation:
\begin{equation}\label{eq:space}
 \Smat=\{A\in\Vmat:\Test_{q_s}(B_h)=0
       \text{ for all }s\in\{1,\ldots,J\},\ h\in\mathcal H\}.
\end{equation}
These are homogeneous linear constraints on matrix entries, so $\Smat$
is a linear space. Equivalently, its constraints say
\begin{equation}\label{eq:weighted-equations}
 \sum_b\lambda_b h(b)q_s(b)=0.
\end{equation}
If $b$ is a satisfying assignment, $A(b)$ passes every constraint and
is a nonzero rank-one matrix in $\Smat$. Thus the YES case already works,
whatever additional weights we choose.

\paragraph{What a relation between columns actually says.}

For example, suppose columns $0,1,2$ of the \emph{whole stack} satisfy
\[
 \text{column }2=c\,\text{column }0+d\,\text{column }1.
\]
Here $c,d\in\Fp$ are scalars.
This is what we mean by a \emph{column relation}: an equality between
the indicated column vectors. It must hold in every row of every block.
The row indexed by $i$ in the block $B_h$ says exactly
\begin{equation}\label{eq:example-relation}
 \sum_b\lambda_b h(b)b_i(b_2-c-d b_1)=0.
\end{equation}
Consequently, inside this particular sum, we can replace $b_2$ by
$c+d b_1$ without changing its value.
This does \emph{not} assert that $b_2=c+d b_1$ at every assignment.
It asserts an equality between weighted sums, proved by expanding a
matrix-column equality.

If $A$ has rank $\rho$, only $\rho$ of its columns are needed to express
all the others. The example shows why that might help: relations between
columns can give replacement rules inside our tests.

\subsection{Choose weights that work for every possible column space}\label{sec:weights}

For the threshold $1\le R\le N$ fixed above, build weights from $R$ linear
combinations of the coordinates $b_0,b_1,\ldots,b_N$, remembering that
$b_0=1$. If these combinations are $\ell_1(b),\ldots,\ell_R(b)$,
include all products
\begin{equation}\label{eq:generic-weights}
 \ell_1(b)^{\alpha_1}\cdots\ell_R(b)^{\alpha_R},
 \qquad \alpha_j\in\mathbb Z_{\ge0},\quad \alpha_1+\cdots+\alpha_R\le R.
\end{equation}
Thus we can take any polynomial of degree at most $R$ in these $R$
quantities as a weight, by linearly combining blocks.
There are only $\binom{2R}{R}\le4^R$ listed monomials. To obtain this
count, introduce a slack exponent
$\alpha_{R+1}=R-\sum_{j=1}^R\alpha_j$ and count nonnegative
solutions of $\sum_{j=1}^{R+1}\alpha_j=R$: arrange $R$ objects and
$R$ separators in a row. The numbers of objects before the first separator,
between successive separators, and after the last give the $R+1$ exponents.

We use $R$ combinations because a matrix of rank at most $R$ has a
column space of dimension at most $R$. We want our combinations to
span that space. The choice must be made before seeing the matrix, so
we include several lists of combinations. The next elementary lemma
will ensure that one list works.

\paragraph{A family of spanning combinations.}

Require the supplied prime to satisfy the bound below, and set
\begin{equation}\label{eq:prime}
 p>\max\{2^N,\,2NR\},
 \qquad D=NR,\qquad S=\{0,1,\ldots,2D\}\subseteq\Fp.
\end{equation}
The exponential lower bound on $p$ is harmless for the reduction: the
running time is polynomial in $\log p$. Its purpose is to make the
distinct integer powers $1,2,4,\ldots,2^N$ remain distinct in the field.

For each $t\in S$ use the $R$ forms
\begin{equation}\label{eq:forms}
 \ell_{j,t}(b)=\sum_{i=0}^N (2^{j-1}t)^i b_i,
 \qquad 1\le j\le R.
\end{equation}
As usual a zeroth power is one, including when $t=0$.
For example, the first two combinations, when $R\ge2$, are
\[
 \ell_{1,t}(b)=1+\sum_{i=1}^N t^i b_i,
 \qquad
 \ell_{2,t}(b)=1+\sum_{i=1}^N (2t)^i b_i.
\]
The same coefficients will be applied to matrix columns in the next lemma.

We will use the elementary \emph{polynomial root bound}: a nonzero
polynomial of degree $d$ over a field has at most $d$ distinct roots.

\begin{lemma}\label{lem:spanning}
Let $c_0,\ldots,c_N$ be vectors over $\Fp$ spanning a space of dimension
$1\le\rho\le R$. Define vector polynomials
\[
 d_j(T)=\sum_{i=0}^N(2^{j-1}T)^i c_i,
 \qquad 1\le j\le\rho.
\]
There is a nonzero scalar polynomial $\Delta(T)$ of degree at most
$N\rho\le D$ such that, whenever $\Delta(t)\ne0$, the vectors
$d_1(t),\ldots,d_\rho(t)$ are a basis of the given space.
\end{lemma}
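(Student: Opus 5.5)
The plan is to fix a basis $e_1,\ldots,e_\rho$ of $W=\Span\{c_0,\ldots,c_N\}$ and write $c_i=\sum_{r=1}^\rho x_{ri}e_r$ with a coefficient matrix $X\in\Fp^{\rho\times(N+1)}$ of rank $\rho$. Each $d_j(t)$ lies in $W$, so $d_1(t),\ldots,d_\rho(t)$ is a basis of $W$ exactly when the $\rho\times\rho$ matrix $P(t)$ with columns $X\,w_j(t)$ is nonsingular. Here
\[
 w_j(t)=\bigl(1,2^{j-1}t,(2^{j-1}t)^2,\ldots,(2^{j-1}t)^N\bigr)^T .
\]
We will take $\Delta(T)=\det P(T)$. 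Its entries are polynomials of degree at most $N$ in $T$, so $\deg\Delta\le N\rho\le D$. It remains to show $\Delta\ne0$ as a polynomial.

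To show nonvanishing, first put $X$ in a convenient form. Replace $X$ by $GX$ for an invertible $G$; this multiplies $\Delta$ by the nonzero constant $\det G$. Choose $G$ so that $GX$ is in reduced row echelon form with pivot columns $i_1<\cdots<i_\rho$. Then row $r$ of $GX$ is a polynomial row: its entry in column $i$ multiplies $T^i$, and the lowest-degree surviving term is $T^{i_r}$, with coefficient $1$. This is the ``eliminating leading terms'' step the paper alludes to. Indeed, row $r$ has zeros in columns before $i_r$, a $1$ at $i_r$, and zeros at the later pivots.

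Thus entry $(r,j)$ of $GX\,W(T)$ is
\[
 \sum_i x'_{ri}\,(2^{j-1})^iT^i
 = 2^{(j-1)i_r}T^{i_r}+(\text{higher powers of }T).
\]
Factoring $T^{i_r}$ from row $r$ gives
\[
 \Delta(T)=\det G^{-1}\,T^{\sum_r i_r}\,\bigl(\det V+T\cdot(\cdots)\bigr),
 \qquad V_{rj}=(2^{i_r})^{j-1}.
\]
So the lowest-order coefficient of $\Delta$ is, up to a unit, the Vandermonde determinant
\[
 \det V=\prod_{r<r'}\bigl(2^{i_{r'}}-2^{i_r}\bigr).
\]

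The main obstacle is checking that this Vandermonde determinant is nonzero in $\Fp$. The nodes $2^{i_r}$ are distinct integers in $[1,2^N]$, so each difference is a nonzero integer of absolute value less than $2^N<p$. Hence no difference vanishes modulo $p$, which is exactly where the hypothesis $p>2^N$ of \cref{eq:prime} is used. Since $\Delta\ne0$, whenever $\Delta(t)\ne0$ the matrix $P(t)$ is invertible, so $d_1(t),\ldots,d_\rho(t)$ are $\rho$ independent vectors in the $\rho$-dimensional space $W$ and form a basis. I expect the only delicate point to be bookkeeping the row-echelon reduction so that the lowest-degree term genuinely factors as a clean Vandermonde. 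If the echelon argument gets messy, the fallback is Cauchy--Binet: expand $\det(X W(T))$ over $\rho$-subsets of columns and isolate the lexicographically smallest subset with nonzero minor, which contributes the unique lowest-degree monomial.
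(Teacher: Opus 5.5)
Your proof is correct and follows the paper's argument. Both take the $\rho\times\rho$ determinant of the coordinate polynomials evaluated at $2^{j-1}T$, use elimination to give the rows distinct degrees, and extract a Vandermonde determinant in the nodes $2^{e}$, which is nonzero because $p>2^N$; the only difference is that your reduced row echelon form isolates the lowest-order coefficient (distinct trailing degrees $i_1<\cdots<i_\rho$), whereas the paper isolates the top coefficient of $T^{\sum_i e_i}$ after making the leading degrees distinct, a mirror-image variant.
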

\begin{proof}
Choose coordinates on the $\rho$-dimensional space. The coordinates of
$\sum_i c_iT^i$ are $\rho$ polynomials $f_1(T),\ldots,f_\rho(T)$,
each of degree at most $N$. They are linearly independent. To see why,
suppose scalars $r_1,\ldots,r_\rho$ give
$\sum_j r_j f_j(T)=0$. Comparing the coefficient of each $T^i$ says
that the row vector $(r_1,\ldots,r_\rho)$ times the coordinate vector
of $c_i$ is zero. The $c_i$ span the whole coordinate space, so that row
vector must be zero.

Gaussian elimination on their coefficient vectors lets us replace these
polynomials by invertible linear combinations with distinct degrees
$e_1<\cdots<e_\rho$. One way to do this is to select a polynomial
of highest degree, cancel its leading coefficient in the others, and
continue. Denote the resulting nonzero leading coefficients by
$a_1,\ldots,a_\rho$.

Consider the matrix with entry $f_i(2^{j-1}T)$ in row $i$, column $j$,
after this elimination. Its determinant has degree at most $\sum_i e_i$.
The coefficient of $T^{\sum_i e_i}$ is
\[
 \left(\prod_{i=1}^\rho a_i\right)
 \det\begin{pmatrix}
 1&2^{e_1}&(2^{e_1})^2&\cdots&(2^{e_1})^{\rho-1}\\
 1&2^{e_2}&(2^{e_2})^2&\cdots&(2^{e_2})^{\rho-1}\\
 \vdots&\vdots&\vdots&&\vdots\\
 1&2^{e_\rho}&(2^{e_\rho})^2&\cdots&(2^{e_\rho})^{\rho-1}
 \end{pmatrix}.
\]
The displayed matrix is invertible. Indeed, a nontrivial linear combination
of its columns equal to zero would describe a nonzero polynomial of
degree at most $\rho-1$ vanishing at the $\rho$ distinct points
$2^{e_1},\ldots,2^{e_\rho}$. This is impossible. These points are
distinct because $e_i\le N$ and $p>2^N$.

Thus the determinant is a nonzero polynomial of degree at most $N\rho$.
Undoing the invertible row operations changes it only by a nonzero
constant. Take the original determinant as $\Delta(T)$.
Its nonvanishing says precisely that the $d_j(t)$ are independent.
\end{proof}

\paragraph{The complete weight list and the resulting size.}

For every $t\in S$ and every exponent tuple
$\alpha=(\alpha_1,\ldots,\alpha_R)$ in~\eqref{eq:generic-weights},
put the weight
\begin{equation}\label{eq:final-weight}
 h_{\alpha,t}(b)=\prod_{j=1}^R\ell_{j,t}(b)^{\alpha_j}
\end{equation}
in $\mathcal H$. Keep repeated weights as separate blocks; this makes
the description uniform and does not harm the proof.
Construct $A(b)$, $\Vmat$, and $\Smat$ as in
\eqref{eq:assignment-stack}--\eqref{eq:space}.
There are
\[
 K=(2NR+1)\binom{2R}{R}
\]
blocks, so the matrices have $m=(N+1)K$ rows and $N+1$ columns.
Append $m-(N+1)$ zero columns to make them square. This does not change
their rank. The right factor of an honest matrix is then
$(1,b_1,\ldots,b_N,0,\ldots,0)^T$, a Boolean vector with at most
$N+1$ nonzero entries.

\begin{proposition}\label{prop:construction}
For the supplied prime in~\eqref{eq:prime}, the construction gives a
matrix space $\Smat\subseteq\Fp^{m\times m}$ with
\begin{equation}\label{eq:m}
 m=(N+1)(2NR+1)\binom{2R}{R}.
\end{equation}
A satisfying assignment gives a nonzero rank-one matrix in $\Smat$
with the Boolean right factor just described. If the equations are
unsatisfiable, every nonzero matrix in $\Smat$ has rank greater than $R$.
An ordered basis of $\Smat$ is deterministically computable in time
polynomial in $N,J,m,\log p$. Given a satisfying assignment, the
coordinates of its matrix in that basis are computable within the same bound.
For $R=\lfloor\log_2 N\rfloor$, one has
$m=O(N^4\log N)$, and every nonzero NO-case matrix has rank
$\Omega(\log m)$.
\end{proposition}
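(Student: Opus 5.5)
The plan is to handle the four claims in order of difficulty: size, completeness, computability, and then soundness, which carries the real content.

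\textbf{Size and completeness.} The size formula \eqref{eq:m} is the block count $K=(2NR+1)\binom{2R}{R}$ times $N+1$. For $R=\lfloor\log N\rfloor$ we have $\binom{2R}{R}\le 4^R\le N^2$, so $m=O(N\cdot N\log N\cdot N^2)=O(N^4\log N)$ and $R+1=\Omega(\log m)$. Completeness was already observed after \eqref{eq:weighted-equations}: $A(b)$ for a satisfying $b$ lies in $\Smat$, is nonzero because it has a weight-$1$ block ($\alpha=0$), and is rank one with the padded Boolean right factor.

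\textbf{Computability.} I would compute $\Vmat$ without enumerating $2^N$ assignments. The entries of $A(b)$ are $h_{\alpha,t}(b)b_rb_i$. I would process the bits $b_1,\ldots,b_N$ one at a time, maintaining a basis of the span of the partial evaluations, in the spirit of coefficient-space propagation for branching programs~\cite{RazShpilka2005}. Each step has the two Boolean choices $b_i\in\{0,1\}$, and the maintained space has dimension at most $m(N+1)$, so each step is polynomial. Concretely, I would treat the $\ell_{j,t}$ as running linear accumulators and keep the span of the pairs (partial accumulator values, remaining-suffix data). I will need to check that this closure is exactly $\Vmat$, meaning it is neither larger nor smaller. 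Intersecting with the linear constraints \eqref{eq:space} and running Gaussian elimination then gives an ordered basis. The coordinates of $A(b)$ in that basis come from solving a linear system.

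\textbf{Soundness.} Let $0\ne A=\sum_b\lambda_bA(b)\in\Smat$ have rank $\rho\le R$, and define the functional $L(f)=\sum_b\lambda_bf(b)$. Apply \cref{lem:spanning} to the columns $c_0,\ldots,c_N$ of the whole stack. Since $|S|=2D+1>D\ge\deg\Delta$, some $t\in S$ has $\Delta(t)\ne0$. By \eqref{eq:forms}, $d_j(t)$ is exactly the stacked column $\sum_b\lambda_b\ell_{j,t}(b)(h(b)v(b))_h$. Writing each $c_i=\sum_j\kappa_{ij}d_j(t)$ and expanding as in \eqref{eq:example-relation}, we get
\[
L\bigl(h\,b_r\,(b_i-P_i(\ell))\bigr)=0
\]
for every weight $h$, every $r$, and every $i$, where $P_i(y)=\sum_j\kappa_{ij}y_j$ is linear in $y=(\ell_{1,t},\ldots,\ell_{\rho,t})$. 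This includes $i=0$, which gives $1\equiv P_0(\ell)$ and lets us homogenize. Now define $\Phi(g)=L(g(\ell(b)))$ on polynomials $g$ of degree at most about $R+2$ in $\rho$ variables. Repeated replacement then gives three facts:
\begin{itemize}
\item the entries of $A$ satisfy $L(h\,b_rb_i)=\Phi(h\,P_rP_i)$;
\item the Booleanity identities $b_i^2=b_i$ become $\Phi\bigl(h\,P_i(P_i-P_0)\bigr)=0$;
\item the equation tests become $\Phi(h\,\tilde q_s(P))=0$, where $\tilde q_s$ is the homogenization of $q_s$ using $P_0$.
\end{itemize}
Next I would pick indices $i_1,\ldots,i_\rho$ with $P_{i_1},\ldots,P_{i_\rho}$ independent; this is possible because the $P_i$ express a spanning set of columns. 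I would then form the degree-$\rho\le R$ selectors $E_e(P_{i_1},\ldots,P_{i_\rho})$ from \eqref{eq:selector-preview}, one for each $e\in\{0,1\}^\rho$. Using the Booleanity relations, I would show that $\Phi$ acts as a combination of at most $2^\rho$ point evaluations: $\Phi(g)=\sum_e\mu_e\,g(y_e)$, where each $y_e$ makes every $P_i/P_0$ Boolean, i.e.\ corresponds to a genuine assignment $b^{(e)}$. Then $\Phi(E_e\tilde q_s)=\mu_e\,q_s(b^{(e)})$. Unsatisfiability supplies a failing $s$ for each $e$, so $\mu_e=0$ for all $e$. Hence every entry $\Phi(hP_rP_i)$ vanishes, so $A=0$, which is a contradiction.

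\textbf{Main obstacle.} I expect the hard step to be the degree bookkeeping in this last argument. The replacement rules are valid only against multipliers of the form (weight of degree $\le R$ in $\ell$) times one $b_r$. So every substitution, and the collapse of $\Phi$ onto point evaluations, must keep the selector (degree $\rho$) times the quadratic test within the available degree budget. My first attempt would be to run the replacements one factor at a time, always keeping one genuine $b_r$ in reserve, and to use $\rho\le R$ to absorb the selector inside $h$. I would also double-check that the Booleanity relations suffice to certify that the support points of $\Phi$ are honest Boolean assignments.
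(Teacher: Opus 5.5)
\textbf{The main gap is in Step~1.} Your soundness outline follows the paper's Steps 2--6 closely: your $P_{i_j}(\ell)$ is the paper's $w_j(b)$, and your $P_i(\ell)$ is its $L_i(w(b))$. But without the missing ingredient, the final line ``every entry $\Phi(hP_rP_i)$ vanishes, so $A=0$'' does not follow.

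Your $\Phi$ only evaluates polynomials in the forms $\ell_{j,t}$ at the single sample $t$ you fixed. So only entries of blocks at that sample can be brought into the form $\Phi(hP_rP_i)$. For a block at another sample $t'$, the weight is a product of forms $\ell_{s,t'}$. The column relations give $\sum_b\lambda_b k(b)a(b)h(b)=0$ only when $a$ is a linear form and $h$ lies in the span of the weights. After replacing $b_rb_i$, eliminating the $\ell_{s,t'}$ factors would need multipliers like $P_r(\ell_t)P_i(\ell_t)h''$, with $h''$ built from $t'$-forms, and these are not of that shape in general. So your argument shows only that the blocks at sample $t$ vanish. That is no contradiction if the nonzero entries of $A$ sit at other samples.

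The paper's Step~1 instead chooses $t$ so that a nonzero block lives there. Replace the sample value in a nonzero entry by an indeterminate $T$ to get a nonzero $f(T)$ of degree at most $NR=D$, and take $t\in S$ with $f(t)\Delta(t)\neq0$. That product has degree at most $2D$, which is exactly why $|S|=2D+1$; your count $2D+1>D\ge\deg\Delta$ uses only half of the room. Equivalently, you can run your argument at every $t\in S$ with $\Delta(t)\neq0$, and then observe that each entry polynomial times $\Delta$ vanishes at all $2D+1$ points of $S$.

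\textbf{Smaller points.} Homogenizing with $P_0$ is unnecessary. The homogenized Booleanity and equation relations all vanish at the cube point $e=0$, where every $P_i$ is zero, so by themselves they cannot kill $\mu_0$. You must also use $1\equiv P_0$, which you need anyway to get the inhomogeneous $P_{i_j}^2\equiv P_{i_j}$ for multilinearization. The paper simply keeps constants as constants; this is legitimate because $b_0=1$ allows $a=1$ in \eqref{eq:kernel-one}.

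Cube points are also not automatically honest assignments. The paper puts $L_i^2-L_i$ for every $i$ into the list of candidate failing polynomials $Q$. Then $0=\mu_eQ(e)$ forces $\mu_e=0$ wherever some $L_i(e)\notin\bits$.

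The degree budget works out as you hope. \eqref{eq:kernel-one} lets you replace the inputs of a quadratic under any weight of degree at most $R$. This gives Boolean identities for the new coordinates with multipliers of degree at most $R$, which is exactly what multilinearizing $E_eQ$ (degree at most $\rho+2$) requires. Using a factor $\ell_{j,t}$ itself as the row multiplier gives \eqref{eq:kernel-pure}. That identity replaces the factors $\ell_{s,t}$ with $s>\rho$ inside a weight at sample $t$, which you also need before writing such entries as $\Phi(\cdot)$.

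Finally, computability needs no ``suffix data.'' With unassigned bits set to zero, fixing bit $i$ to $\varepsilon$ shifts each $\ell_{j,t}$ by a constant. By the binomial theorem (the reason all monomials of degree at most $R$ are listed), the update is a single linear map $A\mapsto U_{i,\varepsilon}AV_{i,\varepsilon}^T$, independent of the other bits. Exactness of the propagated span is then immediate.
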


We already proved the satisfying-assignment claim immediately after
\eqref{eq:space}. The next subsection proves the rank lower bound;
\cref{sec:algorithm} gives the basis algorithm.

\subsection{Why a nonzero low-rank matrix would yield a solution}\label{sec:soundness}

Suppose, toward a contradiction, that the input equations are
unsatisfiable but $A\in\Smat$ is nonzero and has rank
$1\le\rho\le R$. Discard its appended zero columns. Choose an
expression
\[
 A=\sum_{b\in\bits^N}\lambda_b A(b)
\]
and keep these same coefficients throughout the proof. This expression
need not be unique and we do not need to find it algorithmically.
Let $c_0,\ldots,c_N$ be the actual column vectors of this whole stack.

\paragraph{Step 1: choose one useful list that also contains a nonzero block.}

Apply Lemma~\ref{lem:spanning} to the columns $c_i$, obtaining
$\Delta(T)$. Its nonvanishing will give a basis of the whole column
space. We also want a nonzero block among the weights at the same value
of $t$.

Since $A\ne0$, some block entry is nonzero. Suppose it has weight
index $\alpha$, sample value $t_0$, and row and column indices $i,j$.
Replace the sample value in that entry by a formal variable $T$:
\[
 f(T)=\sum_b\lambda_b b_i b_j
       \prod_{s=1}^R
       \left(\sum_{k=0}^N(2^{s-1}T)^k b_k\right)^{\alpha_s}.
\]
This polynomial is nonzero because $f(t_0)\ne0$.
Each factor inside parentheses has degree at most $N$, and
$\sum_s\alpha_s\le R$, so $\deg f\le NR=D$.
The product $f(T)\Delta(T)$ is therefore a nonzero polynomial of degree
at most $2D$. It cannot vanish at every one of the $2D+1$ distinct
values in $S$.

Fix a value $t$ where the product is nonzero. At this value,
\begin{enumerate}
\item the first $\rho$ column combinations in Lemma~\ref{lem:spanning}
form a basis of the entire column space of $A$;
\item at least one of the blocks with this value of $t$ is nonzero.
\end{enumerate}
We will show that all these blocks are zero, obtaining a contradiction.
From now on this $t$ is fixed. Write $\ell_j=\ell_{j,t}$ and
$\ell(b)=(\ell_1(b),\ldots,\ell_R(b))$.

\paragraph{Step 2: record exactly which replacements the columns permit.}

Suppose scalars $\kappa_0,\ldots,\kappa_N\in\Fp$ give a column
relation $\kappa_0c_0+\cdots+\kappa_Nc_N=0$, and write
$k(b)=\sum_{i=0}^N\kappa_i b_i$.
Let $a(b)$ be any linear combination of $b_0,\ldots,b_N$, and let
$H$ be a polynomial of degree at most $R$ in $R$ variables.
Expanding block rows as in~\eqref{eq:example-relation} gives
\begin{equation}\label{eq:kernel-one}
 \sum_b\lambda_b k(b)a(b)H(\ell(b))=0,
 \quad \deg H\le R,
\end{equation}
Indeed the individual rows give $a=b_i$; linear combinations of rows
give general $a$. Linear combinations of the monomial blocks give
general $H$. Constants are allowed as $a$ because $b_0=1$.

For any polynomial $G$ of degree at most $R+1$ in $R$ variables,
we also have
\begin{equation}\label{eq:kernel-pure}
 \sum_b\lambda_b k(b)G(\ell(b))=0,
 \quad \deg G\le R+1.
\end{equation}
For a nonconstant monomial in $G$, use one of its $\ell_j$ factors as
$a$ in~\eqref{eq:kernel-one}; the remaining product has degree at most
$R$. For a constant use $a=1$. Adding these identities proves the claim.
These two displayed equations are the replacement rules we need.

\paragraph{Step 3: express the columns using $\rho$ coordinates.}

Choose $\rho$ independent columns $c_{i_1},\ldots,c_{i_\rho}$ from
$c_0,\ldots,c_N$. They form a basis of the column space. Introduce
formal variables $y_1,\ldots,y_\rho$, one for each chosen column.
For each $i$, record the coefficients expressing $c_i$ in this basis
in a linear polynomial:
\[
 c_i=\sum_{j=1}^\rho \gamma_{ij}c_{i_j},
 \qquad L_i(y)=\sum_{j=1}^\rho \gamma_{ij}y_j,
 \qquad y=(y_1,\ldots,y_\rho).
\]
For a chosen basis column its own coordinate is one and the others
are zero, so $L_{i_j}(y)=y_j$.

For example, suppose $\rho=2$, columns $c_0,c_2$ form a basis, and
$c_1=3c_0+2c_2$. Then
\[
 L_0(y)=y_1,\qquad L_2(y)=y_2,\qquad L_1(y)=3y_1+2y_2.
\]
These polynomials simply copy the coefficients of the column expressions.
The choice in this example is illustrative: in general column zero
may or may not be among the chosen columns.

Next we need formulas for these chosen coordinates using the available
weight forms. This will ensure that polynomials of degree at most $R$
in the new coordinates are allowed as weights.
The first $\rho$ combinations
$d_j=\sum_i(2^{j-1}t)^i c_i$ are another basis, by our choice of $t$.
Express each chosen actual column in that basis:
\[
 c_{i_j}=\sum_{s=1}^\rho u_{js}d_s,
 \qquad w_j(b)=\sum_{s=1}^\rho u_{js}\ell_s(b),
 \qquad w(b)=(w_1(b),\ldots,w_\rho(b)).
\]
Here $y_j$ is a formal variable used to write polynomials, whereas
$w_j(b)$ is a specific field value for each assignment $b$. We evaluate
polynomials in $y$ at $y=w(b)$ throughout the proof.
Replacing every $b_i$ in the formula for $w_j(b)$ by the
column $c_i$ gives exactly $c_{i_j}$. It follows that the coefficients
of $b_i-L_i(w(b))$ give a column relation.
The rules in Step 2 therefore apply to each of these differences.

\textbf{The values $w_j(b)$ need not be Boolean.} They are built from
linear combinations of the weight forms, not by taking bits of $b$.
Likewise, $b_i=L_i(w(b))$ need not hold pointwise. The equations in
Step 2 say precisely where these replacements are valid.

\paragraph{Step 4: justify replacements in a weighted quadratic.}

Let $g(x_0,\ldots,x_N)$ be any polynomial of degree at most two and
let $H$ be any polynomial of degree at most $R$ in $R$ variables. We claim
\begin{equation}\label{eq:replacement}
 \sum_b\lambda_b g(v(b))H(\ell(b))
 =\sum_b\lambda_b
 g\bigl(L_0(w(b)),\ldots,L_N(w(b))\bigr)H(\ell(b)).
\end{equation}
The weight $H(\ell(b))$ is the same on both sides. Only the inputs to
$g$ have been replaced. Constants in $g$ remain constants.
The occurrence of $v(b)$ simply
means that the zeroth input of $g$ is evaluated at $b_0=1$.

Here is the proof, including the degree bounds. For a quadratic monomial,
use the identity
\[
 b_i b_j-L_i(w(b))L_j(w(b))
 =(b_i-L_i(w(b)))b_j
   +L_i(w(b))(b_j-L_j(w(b))).
\]
Multiply by $H(\ell(b))$ and sum with coefficients $\lambda_b$.
Both terms vanish by~\eqref{eq:kernel-one}: the other factor is a
linear combination of the original $b_k$, and $\deg H\le R$.
A linear monomial is treated with the same rule and $a=1$;
a constant needs no change. Adding these monomial identities proves
\eqref{eq:replacement}.

We will use weights that are polynomials in the new coordinates.
Given $F(y)$ of degree at most $R$, choose $H$ so that
$H(\ell(b))=F(w(b))$, by substituting the formulas defining $w_j$.
Since each $w_j$ is a linear combination of the $\ell_s$, this $H$
still has degree at most $R$. Equation~\eqref{eq:replacement} therefore
allows us to replace $g$ while keeping the multiplier $F(w(b))$ unchanged.

For an input equation, use the polynomial
$g(x_0,\ldots,x_N)=q_s(x_1,\ldots,x_N)$, which simply ignores
its zeroth argument. For a Boolean identity, use $g=x_i^2-x_i$;
the original value is zero also when $i=0$, because $b_0=1$.
Applying this replacement to these two kinds of polynomial gives,
for every $F$ with $\deg F\le R$,
\begin{align}
 \sum_b\lambda_b
 q_s\bigl(L_1(w(b)),\ldots,L_N(w(b))\bigr)F(w(b))&=0,
                                                   \label{eq:source-reduced}\\
 \sum_b\lambda_b
 \bigl(L_i(w(b))^2-L_i(w(b))\bigr)F(w(b))&=0,
                 \qquad 0\le i\le N.              \label{eq:boolean-reduced}
\end{align}
The first line uses the imposed equation tests; the second uses
identities true at each original assignment.
In particular, since $L_{i_j}(y)=y_j$, the second line gives
\begin{equation}\label{eq:coordinate-boolean}
 \sum_b\lambda_b\bigl(w_j(b)^2-w_j(b)\bigr)F(w(b))=0,
 \qquad \deg F\le R.
\end{equation}
This is the exact sense in which the new coordinates obey Boolean
identities inside the weighted sums. We have not assumed that their
individual values are bits.

\paragraph{Step 5: use the selectors on just $\rho$ coordinates.}

We can now use the selectors from \cref{sec:selectors}, this time on the formal
variables $y_1,\ldots,y_\rho$. Before doing so we explain why these
selectors can describe the sums, even though $w(b)$ need not be Boolean.

For any polynomial $f(y)$ of degree at most $R+2$, repeatedly replace
every power $y_j^k$, $k\ge2$, by $y_j^{k-1}$. Each change is a multiple
of $y_j^2-y_j$, because
\[
 y_j^k-y_j^{k-1}=(y_j^2-y_j)y_j^{k-2}.
\]
Including any other factors of the monomial, the multiplier has degree
at most $(R+2)-2=R$. Equation~\eqref{eq:coordinate-boolean} therefore
shows that these changes preserve the sum
$\sum_b\lambda_b f(w(b))$.
Call the result $f_{\rm ml}$. It is a \emph{multilinear} polynomial:
every variable has exponent at most one. It has the same values as
$f$ on $\bits^\rho$.

Every multilinear polynomial equals the sum of its Boolean values times
the corresponding selectors. In one variable this is
$a+cy=a(1-y)+(a+c)y$; applying this identity to one variable after
another gives
\[
 f_{\rm ml}(y)=\sum_{e\in\bits^\rho} f(e)E_e(y).
\]
Consequently, if we define the selector weights
\[
 \mu_e=\sum_b\lambda_b E_e(w(b)),\qquad e\in\bits^\rho,
\]
then for every $f$ of degree at most $R+2$ we have
\begin{equation}\label{eq:cube}
 \sum_b\lambda_b f(w(b))
       =\sum_{e\in\bits^\rho}\mu_e f(e).
\end{equation}
Thus the sums we need can be computed using only $2^\rho$ field weights
on a Boolean cube. This conclusion is about those sums; it is not a
claim that the original coefficients $\lambda_b$ have small support.

Fix a cube point $e\in\bits^\rho$. It proposes the assignment
\[
 (L_1(e),\ldots,L_N(e)).
\]
This proposal must fail. Either some $L_i(e)$ is not a bit, so
$L_i(e)^2-L_i(e)\ne0$, or all are bits and one of the input equations
fails. Therefore some polynomial $Q(y)$ from the list
\[
 \{L_i(y)^2-L_i(y):1\le i\le N\}
 \ \cup\
 \{q_s(L_1(y),\ldots,L_N(y)):1\le s\le J\}
\]
has $Q(e)\ne0$.
Equations~\eqref{eq:source-reduced}--\eqref{eq:boolean-reduced} apply
with multiplier $E_e$, since $\deg E_e=\rho\le R$. They give
\[
 0=\sum_b\lambda_b Q(w(b))E_e(w(b))
   =\mu_e Q(e).
\]
The second equality is~\eqref{eq:cube}, valid because
$\deg(QE_e)\le\rho+2\le R+2$; the selector kills every other cube point.
Since $Q(e)\ne0$, it follows that $\mu_e=0$.
This holds for every $e$. Equation~\eqref{eq:cube} now says
\begin{equation}\label{eq:all-zero}
 \sum_b\lambda_b f(w(b))=0\qquad\text{for every }\deg f\le R+2.
\end{equation}

\paragraph{Step 6: return to the entries of the matrix.}

An entry of any block at the fixed sample is
$\sum_b\lambda_b b_i b_j H(\ell(b))$ for a weight monomial $H$ of
degree at most $R$. Equation~\eqref{eq:replacement} changes this to
\[
 \sum_b\lambda_b L_i(w(b))L_j(w(b))H(\ell(b)).
\]
To apply~\eqref{eq:all-zero}, we must express the remaining factors
$\ell_s(b)$ in terms of $w(b)$ as well. For each such factor, make the
replacement
\[
 \ell_s(b)=\sum_{k=0}^N(2^{s-1}t)^k b_k
 \quad\longmapsto\quad
 \sum_{k=0}^N(2^{s-1}t)^k L_k(w(b)).
\]
The difference is a linear combination of $b_k-L_k(w(b))$, so its
coefficients give a column relation. All the other factors, including
ones already replaced, are polynomials in the original $\ell_s$:
each $w_j$ is a linear combination of those forms. The product of
the other factors has degree at most $2+(R-1)=R+1$ in these forms.
Equation~\eqref{eq:kernel-pure} therefore
justifies each replacement inside the sum.

After these replacements the entry has the form
$\sum_b\lambda_b f(w(b))$, where $f$ is a polynomial in $\rho$
variables of degree at most $R+2$. It is zero by~\eqref{eq:all-zero}.
All the blocks at this sample are zero, contradicting Step 1.
The rank lower bound is proved.

\paragraph{Why we kept weights through degree $R$.}
The selectors above have degree $\rho\le R$, and multiplying by a
quadratic uses degree at most $R+2$. This is exactly what a degree-$R$
weight times a bit pair supplies. This degree allowance lets the
proof treat all columns, including column zero, in the same way.
The proof applies whether or not column zero is one of the basis columns.

\paragraph{The proof gives an explicit short list of candidate assignments.}
Given a nonzero matrix in $\Smat$ of rank $\rho\le R$, choose any basis
from its actual columns and compute the polynomials $L_i$ as in Step 3.
These polynomials depend only on that column basis. Enumerate the
$2^\rho$ tuples $e$ and
test each proposal $(L_1(e),\ldots,L_N(e))$ for Boolean entries and
the input equations. At least one proposal must pass: if every proposal
failed, the selector argument would again force the nonzero block to be
zero. Thus the small coordinate space has a concrete purpose---it gives
a short list containing a satisfying assignment. The sample and the
weighted sums prove that the list works; constructing the list itself
requires only the column basis.

\subsection{Compute the space without listing all assignments}\label{sec:algorithm}

The definition of $\Vmat$ used $2^N$ matrices $A(b)$. We now give an
exact algorithm that keeps only a basis at each stage. The key is that
setting one more bit to zero or one acts linearly on the full matrix
encoding.

During this algorithm, temporarily set the bits not yet assigned to
zero. Suppose the next bit is $b_i$, and we change it from zero to
$\varepsilon\in\bits$. At each sample, this changes a weight form by
\[
 \ell_{j,t}\longmapsto\ell_{j,t}
                       +\varepsilon(2^{j-1}t)^i.
\]
For example a square changes by
$\ell^2\mapsto\ell^2+2c\ell+c^2$ for the appropriate constant $c$.
In general, the binomial formula writes every updated weight of degree
at most $R$ as a linear combination of the old weights of degree at most
$R$. This is why the list includes \emph{all} lower degrees.

In the left vector in~\eqref{eq:assignment-stack}, the coordinates are
$h(b)b_k$ for all weights $h$ and $0\le k\le N$.
For $k\ne i$, the bit $b_k$ does not change, so the updated coordinate
is the same linear combination of the old coordinates $h(b)b_k$.
For $k=i$, its new value is $\varepsilon$ times the updated weight,
which is a linear combination of the coordinates $h(b)b_0=h(b)$.
Thus there is an explicit linear map $U_{i,\varepsilon}$ on the left
vector that performs this update for every current partial assignment.
There is also a linear map $V_{i,\varepsilon}$ on the right vector:
it leaves other coordinates alone and replaces coordinate $i$ by
$\varepsilon$ times coordinate zero.
The whole matrix therefore updates by the linear map
\[
 A\longmapsto U_{i,\varepsilon} A V_{i,\varepsilon}^T.
\]

Start from the one-dimensional span of the encoding with all bits zero.
At step $i$, apply both maps, for $\varepsilon=0$ and $1$, to a basis of
the current space. Keep a basis of the span of the resulting matrices,
using Gaussian elimination on their lists of entries.
Inductively this is exactly the span of all encodings of the first $i$
bits, with the others temporarily zero: linear maps take a spanning set
to a spanning set of its image. After $N$ steps the space is exactly
$\Vmat$.

To make the output deterministic and ordered, fix the numerical order
of samples, lexicographic order of exponent tuples, and row-by-row order
of matrix entries. In every Gaussian elimination use the first available
nonzero pivot in these orders and retain the resulting basis in pivot order.

The basis size never exceeds the number $m(N+1)$ of entries in a
rectangular matrix. Every transition and every elimination therefore
takes polynomially many field operations in $N$ and $m$. Intersecting
with the constraints in~\eqref{eq:space} is another homogeneous linear
system, giving an ordered basis of $\Smat$. A satisfying assignment's matrix is
explicit, so its coordinates in that basis are found by one further
linear solve. Appending zero columns requires no additional argument.
All operations have bit complexity polynomial also in $J$ and $\log p$.

Finally,
\[
 m=(N+1)(2NR+1)\binom{2R}{R}=O(N^2R4^R).
\]
For $R=\lfloor\log_2N\rfloor$ we have $4^R\le N^2$, hence
$m=O(N^4\log N)$. It follows that $\log m=O(\log N)$, while the
NO-case rank bound $R+1$ is greater than $\log_2N$.
Thus that bound is $\Omega(\log m)$, as claimed in
Proposition~\ref{prop:construction}.

\subsection{Putting everything together}
We now prove our theorem on NP hardness of approximation for MinRank.

\begin{proof}[Proof of \cref{thm:minrank}]
Encode the circuit by the Boolean quadratic equations of
\cref{sec:tables}, and take $R=\lfloor\log N\rfloor$.
Proposition~\ref{prop:construction} supplies the matrix space, the ordered
basis algorithm, and the rank gap. A satisfying witness determines the
assignment $b$, hence the two factors of $A(b)$ in
\eqref{eq:assignment-stack}. The linear solve in \cref{sec:algorithm}
gives its coefficients in the output basis. The padded right factor is
Boolean with weight at most $N+1$. Finally, \eqref{eq:m} depends only on
$N$ and is polynomially bounded, so it can be computed before choosing
the prime. This proves all the stated guarantees.
\end{proof}

\begin{corollary}[Growing-prime hardness]\label{cor:hardness}
For some \(f(m)=\Theta(\log m)\), it is NP-hard under randomized
reductions to distinguish a matrix space containing a nonzero
\(uv^T\) with \(v\in\bits^m\) from one whose every nonzero matrix has
rank at least \(f(m)\), over primes \(p=\Theta(2^m)\).
The reduction can be zero-error expected polynomial time, or bounded
polynomial time with negligible error probability.
\end{corollary}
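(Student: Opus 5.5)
The plan is to compose the deterministic reduction of \cref{thm:minrank} with a randomized choice of a prime in $[2^m,2^{m+1})$, exactly as in the parameter selection of \cref{sec:we}. Start from an NP-hard satisfiability instance $\varphi$ and set $N=\max\{2,|\varphi|,\text{number of encoding variables}\}$ and $R=\lfloor\log N\rfloor$. Compute the matrix order $m=(N+1)(2NR+1)\binom{2R}{R}$, which by \cref{thm:minrank} depends only on $N$ and is computable before choosing $p$. From the formula, $m>N$ and $m>2NR$. Hence any prime $p\in[2^m,2^{m+1})$ satisfies $p>\max\{2^N,2NR\}$, and the deterministic algorithm of \cref{thm:minrank} applies in time $\poly(N,\log p)=\poly(N)$, since $\log p\le m+1=\poly(N)$. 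Clearly $p=\Theta(2^m)$.

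Next I would verify the gap. In the YES case, \cref{thm:minrank} gives a nonzero $uv^T\in\Smat$ with $v\in\bits^m$. In the NO case, every nonzero matrix has rank at least $R+1>\log N$. Since $m=O(N^4\log N)$, we have $\log m\le 5\log N+O(1)$. So setting $f(m)=\lfloor\log m/c\rfloor$ for a suitable constant $c$ (e.g.\ $c=6$) gives $f(m)\le R+1$ for all large $N$, and $f(m)=\Theta(\log m)$. Because $m$ determines $N$ monotonically, $f$ is a genuine function of $m$. Small $N$ can be handled by padding $N$ to a fixed constant.

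Finally, I would handle the randomness in choosing the prime. For the zero-error version, sample uniform integers from $[2^m,2^{m+1})$ and test each with AKS~\cite{AgrawalKayalSaxena2004} until a prime is found. Prime density $\Omega(1/m)$ gives $O(m)$ expected trials, and the output is always correct once it halts. For the bounded version, cap the procedure at $m^2$ trials as in \cref{lem:sampling}. On failure, output a fixed instance of the appropriate type, say a trivially valid YES instance or simply reject; this is an error with probability $2^{-\Omega(m)}$, which is negligible. The only real care point, and the step I expect to be the main obstacle, is making the failure branch well-defined as a many-one reduction. Since the abort probability is negligible, outputting any fixed instance suffices.
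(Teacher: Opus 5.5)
Your proposal is correct and follows the paper's proof essentially step for step: fix $m$ from $N$ before choosing $p$, sample a prime in $[2^m,2^{m+1})$ with AKS (uncapped for zero error, capped for bounded time), use $m>N$ and $m>2NR$ to meet the prime hypothesis of \cref{thm:minrank}, and read off the $\Omega(\log m)$ gap from $m=N^{O(1)}$. One small technicality needs fixing. The paper outputs the fixed NO instance $\Span_{\mathbb F_5}\{I_2\}$ on failure and sets $f(2)=2$ so that this instance is valid, whereas your explicit $f(m)=\lfloor\log m/6\rfloor$ is at most $1$ for $m<2^{12}$, where every space satisfies the NO condition and the YES and NO sets overlap; taking $f(m)=\max\{2,\lfloor\log m/6\rfloor\}$, which is still at most $R+1$ on every output, keeps the promise problem well defined.
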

\begin{proof}
Use \(N=\Theta(|\varphi|)\) in \cref{thm:minrank}, determine \(m\),
and sample a prime in \([2^m,2^{m+1})\). Prime density in this interval
is \(\Omega(1/m)\), and primality is decidable in polynomial
time~\cite{AgrawalKayalSaxena2004}. Thus repeated uniform sampling
takes expected polynomial time and always outputs a valid prime.
Capping the number of attempts at a sufficiently large polynomial
gives negligible failure probability. The formula \eqref{eq:m} gives \(m>N\) and \(m>2NR\), so all accepted
primes satisfy \(p>\max\{2^N,2NR\}\).
Finally, \(m=N^{O(1)}\) and \(m\ge N\), so the gap \(R+1\) is
\(\Omega(\log m)\). Choose \(f(2)=2\); on a truncated sampler's failure,
output the fixed valid NO instance
\(\Span_{\mathbb F_5}\{I_2\}\), where \(I_2\) is the $2\times2$ identity
matrix. This gives the bounded-time version.
The Boolean factor meets any polynomial coordinate bound.
\end{proof}

{\normalsize\raggedright
\bibliographystyle{alpha}
\bibliography{refs}
}
\end{document}